\documentclass[letterpaper]{article} 
\usepackage{aaai25}  
\usepackage{times}  
\usepackage{helvet}  
\usepackage{courier}  
\usepackage[hyphens]{url}  
\usepackage{graphicx} 
\urlstyle{rm} 
\usepackage{natbib}  
\usepackage{caption} 
\frenchspacing  
\setlength{\pdfpagewidth}{8.5in} 
\setlength{\pdfpageheight}{11in} 
\usepackage{tabularx}
\usepackage{xcolor}
\usepackage[T1]{fontenc}
\usepackage{currvita}

%
\usepackage{algorithm}
\usepackage{algorithmic}
\usepackage{amsmath}

%
\usepackage{newfloat}
\usepackage{listings}
\DeclareCaptionStyle{ruled}{labelfont=normalfont,labelsep=colon,strut=off} 
\lstset{%
	basicstyle={\footnotesize\ttfamily},
	numbers=left,numberstyle=\footnotesize,xleftmargin=2em,
	aboveskip=0pt,belowskip=0pt,%
	showstringspaces=false,tabsize=2,breaklines=true}
\floatstyle{ruled}
\newfloat{listing}{tb}{lst}{}
\floatname{listing}{Listing}
%
\pdfinfo{
	/TemplateVersion (2025.1)
}

\setcounter{secnumdepth}{0} 

%


\title{Quantified Linear and Polynomial Arithmetic Satisfiability \\ via Template-based Skolemization \thanks{This work was partially funded by ERC CoG 863818 (ForM-SMArt) and Austrian Science Fund (FWF) 10.55776/COE12}}
\author{
Krishnendu Chatterjee\textsuperscript{\rm 1},
Ehsan Kafshdar Goharshady\textsuperscript{\rm 1},
Mehrdad Karrabi\textsuperscript{\rm 1},
Harshit J Motwani\textsuperscript{\rm 2},
Maximilian Seeliger\textsuperscript{\rm 3},
\DJ or\dj e $\mathbf{\check{\text{Z}}}$ikeli\'{c}\textsuperscript{\rm 4}
}
\affiliations{
\textsuperscript{\rm 1}Institute of Science and Technology Austria, Klosterneuburg, Austria\\
\textsuperscript{\rm 2}Hong Kong University of Science and Technology, Clear Water Bay, Hong Kong\\
\textsuperscript{\rm 3}Vienna University of Technology, Vienna, Austria\\
\textsuperscript{\rm 4}Singapore Management University, Singapore, Singapore\\

ehsan.goharshady@ist.ac.at
%
}



\usepackage{bibentry}

\usepackage{amsmath}
\usepackage{amsthm}
\usepackage{booktabs}
\usepackage{mathtools, amssymb}
\usepackage{paralist}
\usepackage{comment}
\usepackage{subfigure}

\newtheorem*{example*}{Example}
\newtheorem*{remark*}{Remark}
\newtheorem*{theorem*}{Theorem}

\newtheorem{definition}{Definition}

\newtheorem{theorem}{Theorem}

\newcommand{\sat}{\textsl{SAT}}
\newcommand{\monoid}{\textsl{SG}}

\newcounter{todocounter}

\newcommand{\eat}[1]{}

\renewcommand{\paragraph}[1]{\smallskip\noindent\emph{\textbf{#1.}}}

\newcommand{\quantifier}{\mathcal{Q}}

\begin{document}

\maketitle

\begin{abstract}
	The problem of checking satisfiability of linear real arithmetic (LRA) and non-linear real arithmetic (NRA) formulas has broad applications, in particular, they are at the heart of logic-related applications such as logic for artificial intelligence, program analysis, etc. While there has been much work on checking satisfiability of unquantified LRA and NRA formulas, the problem of checking satisfiability of quantified LRA and NRA formulas remains a significant challenge. The main bottleneck in the existing methods is a computationally expensive quantifier elimination step. In this work, we propose a novel method for efficient quantifier elimination in quantified LRA and NRA formulas. We propose a template-based Skolemization approach, where we automatically synthesize linear/polynomial Skolem functions in order to eliminate quantifiers in the formula. The key technical ingredients in our approach are Positivstellens\"atze theorems from algebraic geometry, which allow for an efficient manipulation of polynomial inequalities. Our method offers a range of appealing theoretical properties combined with a strong practical performance. On the theory side, our method is sound, semi-complete, and runs in subexponential time and polynomial space, as opposed to existing sound and complete quantifier elimination methods that run in doubly-exponential time and at least exponential space. On the practical side, our experiments show superior performance compared to state-of-the-art SMT solvers in terms of the number of solved instances and runtime, both on LRA and on NRA benchmarks.
	
\end{abstract}

\section{Introduction}\label{sec:intro}

Satisfiability checking for logical formulas is one of the fundamental problems in automated reasoning that arises in many applications, including logic for artificial intelligence~\cite{AlurHK02}, planning and scheduling~\cite{KautzS92} or deductive verification~\cite{KroeningS16}. Satisfiability modulo theories (SMT) solvers have achieved impressive results at tackling this problem. However, while SMT solvers excel at checking satisfiability of quantifier-free formulas, many applications require reasoning about {\em quantified formulas}. For instance, automated reasoning problems in decision making for multi-agent systems typically involve quantified formulas, with quantifier alternation corresponding to the choices made by each agent~\cite{AlurHK02,ChatterjeeHP10}. Quantified formulas also commonly arise in program verification~\cite{GulwaniSV08} and synthesis~\cite{SolarLezamaTBSS06}.

The key challenge that arises in checking satisfiability of quantified formulas is the highly expensive {\em quantifier elimination} step, that SMT solvers need to perform in order to reduce the problem of checking satisfiability of a quantified formula to that of a quantifier-free formula. It is a classical result that, both in linear real arithmetic and in non-linear real arithmetic theories, the runtime complexity of quantifier elimination is doubly-exponential~\cite{weispfenning1988complexity}, making it practically infeasible and often a daunting task for SMT solvers. To overcome this challenge, some SMT solvers avoid the expensive quantifier elimination step by implementing methods such as quantifier instantiation~\cite{ReynoldsTGKDB13,MouraB07,GeBT09}. However, while being sound, quantifier instantiation is incomplete and may lead to ''Unknown'' outputs by the SMT solver. Another approach to satisfiability checking that avoids the quantifier elimination step is to consider the game semantics of quantified first-order formulas and to treat them as two-player games~\cite{BjornerJ15,farzan2016linear,MurphyK24}. However, these methods are only applicable to formulas in linear real arithmetic. 

Besides checking satisfiability of quantified formulas, another fundamental question is to obtain {\em witnesses of satisfiability for existentially quantified variables}. This feature is important in many applications. In decision making for multi-agent systems, the witnesses for existentially quantified variables gives rise to a strategy of the existentially quantified agent~\cite{AlurHK02,ChatterjeeHP10}. In planning and scheduling, the witnesses gives rise to a control policy~\cite{KautzS92}. Finally, in program synthesis applications, the witnesses gives rise to a program that satisfies the desired specification~\cite{SolarLezamaTBSS06}.


\paragraph{Our Approach} In this work, we propose a novel method for checking satisfiability of {\em quantified formulas} in {\em linear real arithmetic (LRA)} and {\em non-linear real arithmetic (NRA)}. Rather than following the approaches discussed above that try to sidestep quantifier elimination, at the core of our approach lies a {\em novel method for efficient quantifier elimination} in LRA and NRA. As mentioned above, sound and complete procedures for quantifier elimination are computationally expensive and inherently lead to doubly-exponential time and exponential space complexity~\cite{weispfenning1988complexity,brown2007complexity}. Hence, to overcome this complexity barrier, we focus on methods that are sound and {\em semi-complete}. While we defer the formal definition of semi-completeness to later parts of the paper, this relaxed notion of completeness intuitively means that the method is guaranteed to prove or disprove satisfiability whenever it can be witnessed by a certificate of a certain parametrized form. Our novel quantifier elimination procedure gives rise to a method for checking satisfiability of quantified formulas in LRA and NRA that is {\em sound}, {\em semi-complete}, and runs in {\em subexponential time} and {\em polynomial space}. To the best of our knowledge, this is the first method that provides all three of these desirable features. Our method also computes {\em witnesses of satisfiability for existentially quantified variables}.


\paragraph{Method Outline} Our method assumes that a quantified formula is provided in the prenex normal form
\begin{equation*}
	\phi \,\, ::= \,\, \quantifier_1 x_1.\quantifier_2x_2.\dots\quantifier_n x_n. p,
\end{equation*}
where $\quantifier_1, \quantifier_2,\dots,\quantifier_n \in \{\exists,\forall\}$ are quantifiers and $p$ is a quantifier-free formula. 
The method proceeds in three steps to check if $\phi$ is satisfiable (alternatively, to check if $\phi$ is not satisfiable, we may equivalently check if $\neg\phi$ is satisfiable):
\begin{compactenum}
	\item {\em Existential Quantifier Elimination via Skolemization.} Our method first removes all existential quantifiers from the formula $\phi$, by replacing each existentially quantified variable $x_i$ with a function over those variables in $x_1,\dots,x_{i-1}$ that are universally quantified. This process is called {\em Skolemization}, and functions used to express the existentially quantified variables via the universally quantified ones are called {\em Skolem functions}~\cite{scowcroft1988note}. To search for Skolem functions, our method follows what we call a {\em template-based Skolemization approach}, where it fixes a template for Skolem function of each existentially quantified $x_i$ in the form of a symbolic polynomial expression over universally quantified variables in $x_1,\dots,x_{i-1}$. At this stage, the polynomial coefficients are symbolic, and the concrete values of coefficients will be computed in later steps. Computing Skolem functions corresponds to computing witnesses of satisfiability for existentially quantified variables.
	\item {\em Universal Quantifier Elimination via Positivstellens\"atze.} Next, our method removes all universal quantifiers from the formula $\phi$. This is achieved by using Farkas' lemma~\cite{farkas1902theorie} and Positivstellens\"atze theorems~\cite{handelman1988representing,putinar1993positive,krivine1964anneaux,stengle1974nullstellensatz} from algebraic geometry. The procedure results in a quantifier-free formula $\phi^{\textsc{free}}$ whose satisfiability also implies the satisfiability of the formula~$\phi$.
	\item {\em Quantifier-free Formula Satisfiability Checking.} Finally, our method tries to prove that $\phi$ is satisfiable by proving that $\phi^{\textsc{free}}$ is satisfiable. This can be realized by using an off-the-shelf SMT solver, since SMT solvers already excel at satisfiability checking for quantifier-free formulas. If $\phi^{\textsc{free}}$  is proved to be satisfiable, then we conclude that $\phi$ is satisfiable. Otherwise, our method repeats Steps~1-3 for $\neg\phi$ to try to prove that $\phi$ is not satisfiable.
\end{compactenum}
We implement our method for satisfiability checking and experimentally compare it against Z3~\cite{MouraB08} and CVC5~\cite{cvc5}, which are both state-of-the-art SMT solvers. Our experiments show that, when required to find witnesses for existentially quantified variables, our method is able to solve a considerably larger number of quantified formula instances at lower average runtimes, both in LRA and in NRA. Moreover, we observe a large number of {\em unique} proofs for examples that could not be handled by neither Z3 nor CVC5. Thus, our method provides a significant step forward in tackling satisfiability checking and witness construction for quantified formulas, in LRA and NRA.

\paragraph{Contributions} Our contributions are as follows:
\begin{compactenum}
	\item {\em New Method for Quantifier Elimination.} We present a new method for efficient quantifier elimination in quantified formulas in LRA and NRA. Our method is based on a novel template-based Skolemization approach.
	\item {\em Efficient Satisfiability Checking.} Based on the above, we design a new method for efficient satisfiability checking for quantified formulas in LRA and NRA. Our method is sound, semi-complete and runs in subexponential time and polynomial space, parametrized by the size of Skolem function templates. To the best of our knowledge, this is the first method that provides all three of these desirable features for LRA and NRA quantified formulas. In contrast, previous sound and complete procedures have doubly-exponential time complexity and at least exponential space complexity. Moreover, our method also produces witnesses of satisfiability for the existentially quantified variables in the quantified formula.
	\item {\em Experimental Evaluation.} Our experiments showcase a strong practical performance of our method and a considerable improvement in the number of successful satisfiability checks, runtime, as well as unique satisfiability checks over two state-of-the-art SMT solvers.
\end{compactenum}


\section{Related Work}\label{sec:relatedwork}

The key step in existential quantifier elimination within most SMT solvers and computer-algebra systems is the so-called projection process. For LRA, this projection typically relies on Fourier-Motzkin elimination \cite{dantzig1972fourier}. For NRA, cylindrical algebraic decomposition (CAD) \cite{collins-cad75} is commonly employed. Although these methods are sound and complete, they suffer from a doubly exponential runtime complexity in the number of formula variables. To mitigate this issue, modern SMT solvers incorporate various heuristics and algorithms. Notably, a DPLL-style approach for quantified formulas in both LRA and NRA has been proposed \cite{jovanovic2013solving,de2013model}. This approach generalizes the CDCL method used in SAT solvers to handle first-order logic formulas. However, these methods ultimately depend on Fourier-Motzkin and CAD for projection and still suffer from doubly exponential complexity in the worst case.

Additionally, some works have utilized Gr\"obner bases in conjunction with CAD and Positivstellens\"atze theorems for existentially quantified formulas in NRA \cite{passmore2009combined, corzilius2012smt}. These approaches are also sound and complete, but again lead to doubly exponential algorithms. Furthermore, as discussed in the Introduction, some works have also proposed approaches to satisfiability checking for quantified formulas that avoid quantifier elimination. These include quantifier instantiation~\cite{ReynoldsTGKDB13,MouraB07,GeBT09} which is sound but incomplete, and the treatment of quantified first-order formulas as two-player games~\cite{BjornerJ15,farzan2016linear,MurphyK24} which is restricted to LRA. Finally, sKizzo \cite{Benedetti04} employs a symbolic Skolemization method for quantified boolean formulas, however this problem is fundamentally different from ours as we are working with the theory of reals.

\section{Preliminaries}\label{sec:prelims}

In this section, we define the syntax of linear real arithmetic (LRA) and non-linear real artihmetic (NRA) that we consider in this work. Since these are standard notions, we omit the formal definitions and assume that the reader is familiar with the semantics of LRA and NRA, the notion of satisfiability of a formula, etc. In what follows, we consider a finite set $V = \{x_1, x_2, \dots, x_n\}$ of distinct real-valued variables. 

\paragraph{Terms} The set of {\em terms} in LRA is defined via
\[ t \,\, ::= \,\, c \, \mid \, x \, \mid t_1 + t_2 \, \mid \, c \cdot t, \]
whereas the set of {\em terms} in NRA is defined via
\[ t \,\, ::= \,\, c \, \mid \, x \, \mid t_1 + t_2 \, \mid \, t_1 \cdot t_2, \]
where in both cases $c \in \mathbb{R}$ is a real-valued constant and $x$ is a variable in $V$. Hence, while in LRA a term can only be multiplied by a real-valued constant, NRA also allows multiplication of two terms, hence giving rise to polynomials.

\paragraph{Predicates} In both LRA and NRA, a {\em predicate} (sometimes also called a quantifier-free formula) is defined by the syntax
\[ p \,\, ::= \,\, t < 0 \, \mid \, t = 0 \, \mid \, p_1 \lor p_2 \, \mid \, p_1 \land p_2 \]
where $t$ is a term and $p_1$ and $p_2$ are also predicates. Note that logical negation $\neg$ is omitted in the above syntax, as it can be directly applied to the atomic predicates. 

\paragraph{Formulas} Finally, a {\em (quantified) formula} in both LRA and NRA is defined by the syntax
\[ \phi \,\, ::= \,\, \quantifier_1 x_1.\quantifier_2x_2.\dots\quantifier_n x_n. p,  \]
where $\quantifier_1, \quantifier_2, \dots,\quantifier_n \in \{\exists,\forall\}$ and $p$ is a predicate. For each $1 \leq i\leq n$, if $\quantifier_i = \exists$ we call it the {\em existential quantifier}, otherwise we call it the {\em universal quantifier}. In what follows, we assume that the reader is familiar with the notion of {\em satisfiability} of a formula.


\paragraph{Problem Statement} We consider the problems of checking satisfiability of formulas written in LRA and NRA:
\begin{enumerate}
	\item {\bf Problem 1: LRA Satisfiability.} Given a formula $\phi$ in LRA, check whether it is satisfiable.
	\item {\bf Problem 2: NRA Satisfiability.} Given a formula $\phi$ in NRA, check whether it is satisfiable.
\end{enumerate}
\section{Template-based Approach to Skolemization}\label{sec:skolemization}

In this section, we first recall the definitions of Skolem functions and present classical results from real algebraic geometry that illuminate their properties. We then use these properties as the foundation and justification for our template-based approach to Skolemization in our quantifier elimination procedure for LRA and NRA formulas. In particular, these results justify our choice for using linear and polynomial templates for Skolem functions. These will also be important in establishing the semi-completeness of our satisfiability checking algorithm in the following section.

\paragraph{Skolem Functions} Given a formula \(\phi(x_1,\cdots,x_n, y)\) in the first-order theory of reals, a {\em Skolem function} of \(\phi\) is defined as a function \(f : \mathbb{R}^n \rightarrow \mathbb{R}\) such that 
\[
\forall x_1.\cdots \forall x_n. \exists y. \, \phi(x_1,\cdots,x_n, y)  \iff 
\]
\[
\forall x_1.\cdots \forall x_n. \, \phi(x_1,\cdots,x_n, f(x_1,\cdots,x_n))
\]
Skolem functions allow us to remove existentially quantified variables from the formula and to replace them with functions over the preceding universally quantified variables.

\paragraph{Semi-Algebraic Sets} A set $S \subseteq \mathbb{R}^n$ is said to be {\em semi-algebraic}, if there exist two finite sets of polynomials \(P = \{p_1, \ldots, p_m\}\) and \(G = \{g_1, \ldots, g_k \}\) over $\mathbb{R}^n$ such that
\[
S = \left\{ x \in \mathbb{R}^n \, : \,
\begin{aligned}
    &p_1(x) \geq 0, \ldots, p_m(x) \geq 0, \\
    &g_1(x) = 0, \ldots, g_k(x) = 0
\end{aligned}
\right\}
\]


\paragraph{Semi-Algebraic Functions} Let \(A \subseteq \mathbb{R}^n\) and \(B \subseteq \mathbb{R}^m\) be semi-algebraic sets. A function \(f\colon A \rightarrow B\) is called {\em semi-algebraic} if its graph \( \Gamma(f) := \{(x, f(x)) \mid x \in A\} \) is a semi-algebraic set in \(\mathbb{R}^{n+m}\).

\begin{theorem}[\cite{scowcroft1988note}]
    Skolem functions for formulas in the first-order theory of reals are semi-algebraic.
\end{theorem}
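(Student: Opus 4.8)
The plan is to establish the existence of a semi-algebraic Skolem function by combining the Tarski--Seidenberg quantifier elimination theorem for real closed fields with an explicit definable selection from the solution fibers. First I would note that the predicate $\phi(x_1,\dots,x_n,y)$ defines a semi-algebraic set $T \subseteq \mathbb{R}^{n+1}$, since any first-order formula over the reals defines a semi-algebraic set. Its projection onto the first $n$ coordinates, $S = \{x \in \mathbb{R}^n : \exists y.\,\phi(x,y)\}$, is again semi-algebraic by Tarski--Seidenberg, and it is exactly the set on which a Skolem value must be supplied; outside $S$ the value of $f$ is irrelevant to the equivalence and may be set to $0$. Thus the task reduces to choosing, for each $x \in S$, some $f(x)$ in the nonempty fiber $T_x = \{y \in \mathbb{R} : \phi(x,y)\}$ so that the graph $\Gamma(f)$ is semi-algebraic.

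The core step is to make this choice uniform and first-order definable in $x$. Each fiber $T_x$ is a nonempty semi-algebraic subset of $\mathbb{R}$, hence a finite union of points and intervals. I would select a canonical representative of the leftmost connected component of $T_x$: the point itself when the component is a singleton $\{c\}$; the midpoint $(a+b)/2$ when it is a bounded interval with endpoints $a,b$; the value $b-1$ when it is unbounded on the left as $(-\infty,b)$; the value $a+1$ when the (sole) leftmost component is the tail $(a,+\infty)$; and $0$ when $T_x = \mathbb{R}$. The quantities $a = \inf T_x$ and the relevant gap endpoint $b$ are each first-order definable from $x$ (for instance $a = \inf T_x$ is pinned down by $\forall y\,(\phi(x,y) \rightarrow y \geq a)$ together with the usual approximation clause), so the whole case analysis is expressed by a single fixed first-order formula $\psi(x,z)$ asserting that ``$z$ is the representative chosen for $x$.''

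Finally I would verify that $\psi$ is functional on $S$ and that $\psi(x,z)$ together with $x \in S$ entails $\phi(x,z)$, so that $\Gamma(f) = \{(x,z) : \psi(x,z)\}$ genuinely defines a Skolem function; applying Tarski--Seidenberg once more to the first-order formula $\psi$ shows $\Gamma(f)$ is semi-algebraic, which is the claim. Iterating over the existential quantifiers then handles formulas with several existential variables. The main obstacle is precisely this uniform definable choice: a semi-algebraic fiber need not attain its infimum, so one cannot simply set $f(x) = \min T_x$, and the delicate point is to confirm that the number and types of cases are bounded independently of $x$ — so that a single first-order selection formula suffices — while ensuring the extracted endpoints are definable and the chosen value always lands inside $T_x$.
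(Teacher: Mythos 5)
Your proposal is correct, and it is essentially the classical argument behind the cited result: the paper itself offers no in-text proof of this theorem (it defers entirely to \cite{scowcroft1988note}), and what you have reconstructed is the standard ``definable choice'' proof for real closed fields --- Tarski--Seidenberg to make $T$, its projection $S$, and ultimately the graph $\Gamma(f)$ semi-algebraic, plus a uniformly first-order selection of a canonical point in the leftmost interval of each one-dimensional fiber, which is exactly how one circumvents the non-attained-infimum problem you flag. The only step you should write out rather than assert is that membership in the leftmost connected component of $T_x$ and its endpoints are first-order definable uniformly in $x$; for instance, $y$ lies in the leftmost component iff $y \in T_x$ and there is no $z < y$ with $z \notin T_x$ for which some $w < z$ satisfies $w \in T_x$, after which your five-case selection is visibly a single fixed formula $\psi(x,z)$ and the argument closes. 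Your final remark on iterating over existential quantifiers (innermost first, then composing the resulting semi-algebraic functions) is also the right way to pass from the single-variable statement to general prenex formulas.
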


Now, based on classical results from real algebraic geometry on semi-algebraic functions \cite{bochnak2013real}, the following properties hold true for Skolem functions in the first-order theory of reals:
\begin{compactitem}
    \item Skolem functions are piecewise continuous \cite[Proposition~5.20]{basu}.
    \item Skolem functions are bounded above by polynomials \cite[Proposition~2.6.2]{bochnak2013real}.
\end{compactitem}
Another noteworthy aspect of semi-algebraic functions to recall here: polynomial functions are also semi-algebraic.

The above properties of Skolem functions motivate us to consider polynomial functions as viable candidates for Skolem functions. In full generality, a Skolem function can be any function whose graph can be described by polynomial inequalities. However, the properties outlined above indicate that Skolem functions are piecewise continuous and bounded above by polynomials. Thus, searching for template-based polynomials as potential Skolem functions is a reasonable approach in our quantifier elimination procedure. This effectively prunes the search space for synthesizing Skolem functions for a given formula.

\section{Algorithm}\label{sec:algo}

We now present our algorithm for satisfiability checking for quantified formulas in LRA and NRA, which is the main contribution of our work. Our method is based on a novel procedure for quantifier elimination in LRA and NRA. Since our underlying algorithm for LRA and NRA is the same with only minor differences in certain steps, in what follows we provide a unified presentation for both theories and only highlight the differences that are specific to LRA or to NRA.

\paragraph{Goal} We aim for an algorithm for satisfiability checking which satisfies the following desirable properties:
\begin{compactitem}
	\item {\em Soundness}, which means that the output of our algorithm is guaranteed to be correct. That is, if the algorithm outputs that the formula is "satisfiable" (resp.~"unsatisfiable"), then it is indeed satisfiable (resp.~"unsatisfiable").
	\item {\em Semi-completeness}, which means that our algorithm is guaranteed to return an output whenever satisfiability or unsatisfiability of the quantified formula can be proved by a witnesses of a certain form. In our case, the class of witnesses will be linear/polynomial Skolem functions used in our template-based Skolemization approach.
	\item {\em Sub-exponential time and polynomial space complexity}, parametrized by the Skolem function templates size.
	\item {\em Witnesses of satisfiability for existentially quantified variables}, since computing these is important in many applications, as discussed in the Introduction.
\end{compactitem}

\paragraph{Algorithm Assumptions} In what follows, we assume that we are given a prenex normal form quantified formula
\[ \phi \,\, ::= \,\, \quantifier_1 x_1.\quantifier_2x_2.\dots\quantifier_n x_n. p \]
as defined in the Preliminaries, either in LRA or in NRA. Furthermore, we assume that the predicate $p$ is given in conjunctive normal form (CNF). The CNF assumption will be needed in Step~2 of our algorithm below. Finally, we assume that the user provides a maximal polynomial degree $D$ for Skolem function templates (formally defined below).

\begin{algorithm}[tb]
	\caption{Satisfiability checking for quantified LRA and NRA formulas}
	\label{alg:algorithm}
	\textbf{Input}: Quantified formula $ \phi \,\, = \,\, \quantifier_1 x_1.\quantifier_2x_2.\dots\quantifier_n x_n. p$ where $p$ is in CNF.\\
	\textbf{Parameter}: $D$, max polynomial degree of templates\\
	\textbf{Output}: (\texttt{SAT},existential model),~or~\texttt{UNKNOWN}~
	\begin{algorithmic}[1] 
		\STATE For every $i$ where $\quantifier_i \equiv \exists$, let $f_i$ be a polynomial over $x_1, \dots x_{i-1}$ of degree $D$ with unknown coefficients. \\
		\STATE Replace each occurrence of $x_i$ with $f_i$, and obtain  $\phi^{\textsc{univ}}~:=~\forall x_1 \dots x_m. p^{\textsc{univ}} \equiv \forall x_1 \dots x_m. \psi_1 \wedge \dots \land\psi_r$.
		\STATE Let $\phi^{\textsc{free}}:=\mathit{True}$.
		\FORALL{$\psi_i$}
		\STATE Convert $\psi_i$ into a polynomial entailment $\Phi \Rightarrow \psi$. 
		\STATE Apply Farkas, Handelman or Positivstellens\"atze Theorem to $\Phi \Rightarrow \psi$ in order to obtain $\Delta_{\psi_i}$.
		\STATE $\phi^{\textsc{free}} \leftarrow \phi^{\textsc{free}} \wedge \Delta_{\psi_i}$.
		\ENDFOR
		\IF{$\phi^{\textsc{free}}$ is satisfiable}
		\STATE $\mathit{model} := \mathit{getModel}(\phi^{\textsc{free}})$
		\STATE \textbf{Return} (\texttt{SAT,$\mathit{model}$})
		\ELSE 
		\STATE \textbf{Return} \texttt{UNKNOWN}
		\ENDIF
	\end{algorithmic}
\end{algorithm}

The rest of this section provides a detailed description of each of the three steps of our algorithm. We also illustrate each step on a running example. Algorithm \ref{alg:algorithm} shows the pseudocode of our satisfiability checking prpcedure.

\subsection{Step~1: Existential Quantifier Elimination}

In Step~1, the algorithm uses template-based Skolem functions to eliminate existential quantifiers from $\phi$.

For each existentially quantified variable $x_i$ in $\phi$, let $U_i$ denote the set of all universally quantified variables among $x_1,\dots,x_{i-1}$. Denote by $M_{i, D} = \{m_{i, 1}, \dots, m_{i, {k_i}}\}$ the set of all monomials of degree at most $D$ over variables in $U_i$.

The algorithm sets up a {\em template} for the Skolem function of $x_i$ as a symbolic polynomial of degree at most $D$ over the variables in $U_i$, i.e.~as a polynomial expression $f_i(U_i) = \sum_{j=1}^{k_i} c_{i, j} \cdot m_{i, j}$, where $c_{i, j}$'s are {\em template variables} that define polynomial coefficients. At this point, the values of template variables are unknown. The concrete real values that give rise to Skolem functions for each $x_i$ will be computed in Step~3 of the algorithm.

Finally, the algorithm constructs a purely universally quantified formula $\phi^{\textsc{univ}}$ from $\phi$ as follows. First, for each existentially quantified variable $x_i$, substitute each appearance of $x_i$ in $\phi$ by the Skolem function template $f_i(U_i)$ that was constructed for it. Then, remove all existential quantifiers since the existentially quantified variables have already been replaced by their Skolem functions. This procedure again results in a prenex normal formula of the form
\[ \phi^{\textsc{univ}} = \forall x_1.\dots\forall x_m. p^{\textsc{univ}}. \]
Since we are working over LRA or NRA and with polynomial Skolem functions, the predicate $p^{\textsc{univ}}$ is a boolean combination of polynomial (in)equalities. Furthermore, by the algorithm assumptions, the predicate $p^{\textsc{univ}}$ is in CNF.


\begin{example*}
	Consider the following quantified formula
	\begin{equation*}
		\begin{split}
			\phi \equiv &\forall x_1. \exists x_2. \exists x_3. \forall x_4 .\\&((x_4 - 1 < x_3 \lor x_2 \leq x_4) \land (x_4 > x_2 \lor x_3 \geq x_1)).
		\end{split}
	\end{equation*} 
	We use $D = 1$ for this example, i.e.~we are looking for linear Skolem functions. The above formula contains two existentially quantified variables $x_2$ and $x_3$. The variable $x_1$ is the only universally quantified variable preceding $x_2$ and $x_3$, hence $U_2 = U_3 = \{x_1\}$. Thus, the algorithm sets the following templates for Skolem functions for $x_2$ and $x_3$:
	$$f_2(x_1) = c_{2, 1} + c_{2, 2} \cdot x_1, \hspace{1em} f_3(x_1) = c_{3, 1} + c_{3, 2} \cdot x_1$$
	
	The algorithm substitutes these Skolem functions templates into $\phi$ and removes existential quantifiers to obtain:
	\begin{equation*}
		\begin{split}
			\phi^{\textsc{univ}} &\equiv \forall x_1. \forall x_4 .\\(&(x_4 - 1 < \mathbf{c_{3, 1} + c_{3, 2} \cdot x_1} \lor \mathbf{c_{2, 1} + c_{2, 2} \cdot x_1} \leq x_4) \land \\ &(x_4 > \mathbf{c_{2, 1} + c_{2, 2} \cdot x_1} \lor \mathbf{c_{3, 1} + c_{3, 2} \cdot x_1} \geq x_1))
		\end{split}
	\end{equation*}
\end{example*}


\subsection{Step~2: Universal Quantifier Elimination}

In Step~2, the algorithm eliminates universal quantifiers from the formula. Since the predicate $p^{\textsc{univ}}$ is a boolean combination of polynomial (in)equalities, the purely universally quantified formula $\phi^{\textsc{univ}}$ constructed in Step~1 can be viewed as a {\em polynomial entailment}. This is because it defines a boolean combination of polynomial (in)equalities that need to be satisfied for all universally quantified variable valuations. Hence, in order to remove universal quantifiers, we will use some of the classical results from real algebraic geometry \cite{bochnak2013real} that tackle the problem of {\em satisfiability of polynomial entailment}.


\paragraph{Satisfiability of Polynomial Entailment} Before proceeding further with our algorithm, we first define the problem of satisfiability of polynomial entailment and informally introduce the theorems used in our algorithm.

Consider a set $V = \{x_1, \ldots, x_r\}$ of real-valued variables, a set $\Phi = \{p_0 \bowtie 0, \dots, p_m \bowtie 0\}$ of polynomial inequalities over $V$, and a polynomial inequality $\psi = (p \bowtie 0)$ over $V$, where each  $\bowtie \in \{\ge, >\}$.
The {\em satisfiability of polynomial entailment} problem focuses on finding sufficient and necessary conditions for the universally quantified formula
\[ \forall x_1.\dots\forall x_r.\,\, (\Phi \Longrightarrow \psi) \]
to be satisfiable. Real algebraic geometry provides us with mathematical tools to reduce this problem to solving a system of polynomial inequalities, for which sub-exponential algorithms exist~\cite{grigor1988solving}. In particular, our algorithm will utilize the following results:
\begin{compactitem}
	\item \textbf{Farkas' Lemma.} This result provides a sound and complete method when both $\Phi$ and $\psi$ are linear. Given a system $\Phi$ of linear inequalities over $V$, Farkas' Lemma provides a set of necessary and sufficient conditions for $\Phi$ to be satisfiable and for $\Phi$ to entail a linear inequality $\psi$ over $V$. These conditions require that  $\psi$ can be written as non-negative linear combination of the inequalities in $\Phi$ and the trivial inequality $1 \ge 0,$ giving rise to an equivalent system of purely existentially quantified inequalities over the template variables as well as a set of new variables introduced by the Farkas' Lemma transformation.
	\item \textbf{Positivstellens\"atze Theorems.} These provide sound and semi-complete methods when either $\Phi$ or $\phi$ is non-linear. In the case when $\Phi$ is a system of linear inequalities over $V$ but the inequality $\psi$ is non-linear, we use Handelman's Theorem. Otherwise, if both $\Phi$ and $\psi$ contain non-linear inequalities, we use Putinar's Theorem. Both theorems provide a set of conditions for $\Phi$ to be satisfiable and to entail $\psi$, and their application gives rise to a system of purely existentially quantified inequalities over the template variables as well as a set of new variables introduced by the theorem transformation.
\end{compactitem}
We present the formal statements and the details of these theorems in the Appendix.

\paragraph{Universal Quantifier Elimination} Our algorithm uses the above theorems to eliminate universal quantifiers from the formula $\phi^{\textsc{univ}}$ constructed in Step~1. Since $p^{\textsc{univ}}$ is in CNF, we can write $p^{\textsc{univ}} \equiv \psi_1 \land \dots \land \psi_r$, with each $\psi_i \equiv \psi_{i, 1} \lor \dots \lor \psi_{i, {w_i}}$ and each $\psi_{i, 1}$ a polynomial inequality. 

The algorithm first converts each $\psi_i$ to the following equivalent form (if $w_i = 1$, then convert to $1 > 0 \Rightarrow \psi_{i, 1}$): 
\[
\lnot \psi_{i, 1} \land \dots \land \lnot \psi_{{i}, {w_i-1}} \Rightarrow \psi_{i, {w_i}}
\]
Then, depending on polynomial degrees in each $\psi_i$, the algorithm either applies Farkas' lemma if all polynomial degrees are $0$ and $1$, or Positivstellens\"atze theorems if higher degree polynomials are involved. For each $\psi_i$, the procedure results in a system of polynomial inequalities whose satisfiability implies satisfiability of $\psi_i$. Combining these systems together gives rise to a system of polynomial inequalities $\phi^{\textsc{free}}$ whose satisfiability implies satisfiability of $\phi^{\textsc{univ}}$. 


\begin{example*}[Continued]
	We first convert the formula $\phi^{\textsc{univ}}$ to the polynomial entailment form
	\begin{equation*}
		\begin{split}
			\phi^{\textsc{univ}} \equiv (&(x_4 - 1 \geq {c_{3, 1} + c_{3, 2}.x_1} \Rightarrow {c_{2, 1} + c_{2, 2}.x_1} \leq x_4) \land \\ &(x_4 \leq {c_{2, 1} + c_{2, 2}.x_1} \Rightarrow {c_{3, 1} + c_{3, 2}.x_1} \geq x_1))
		\end{split}
	\end{equation*}
	As all involved expressions are linear, the algorithm applies Farkas' lemma. Omiting the details of the translation, the algorithm obtains the following system of polynomial inequalities whose satisfiability implies satisfiability of $\phi^{\textsc{univ}}$:
	\begin{equation*}
		\begin{split}
			\phi^{\textsc{free}} \equiv &(-y_1 \cdot c_{3, 2} = - c_{2, 2}) \land \\&(y_1 \cdot (-1-c_{3, 1}) + y_2 = - c_{2, 1}) \land \\
			& (y_3 \cdot c_{2, 2} = c_{3, 2} - 1) \land (-y_3 = 0)\land \\&(y_3 \cdot (c_{2, 1}) + y_4 = c_{3, 1})
		\end{split}
	\end{equation*}
	The $c_{i,j}$'s are Skolem function template variables, whereas $y_1, \dots, y_4$ are fresh variables introduced by Farkas' lemma.
\end{example*}

\begin{remark*}[Distinction between LRA and NRA]
	Step~2 is the only part of our algorithm where the distinction between LRA and NRA arises. If the initial quantified formula $\phi$ is not expressible in LRA, then it must contain non-linear inequalities and so we cannot apply Farkas' lemma. Therefore, in the NRA case, we can only apply Positivstellens\"atze theorems.
\end{remark*}

\subsection{Step~3: Quantifier-free Satisfiability Checking}

In Step~3, we use an off-the-shelf SMT solver to check the satisfiability of the system of polynomial inequalities (i.e.~a quantifier-free formula) $\phi^{\textsc{free}}$ constructed in Step~2. If the SMT solver finds a solution, then we conclude that $\phi^{\textsc{free}}$ is satisfiable. By Step~2, this implies that $\phi^{\textsc{univ}}$ is satisfiable. Finally, by Step~1, this implies that the original quantified formula $\phi$ is satisfiable. Otherwise, our algorithm cannot prove satisfiability of the input formula. One can run the same procedure on $\neg\phi$ where its satisfiability is equivalent to $\phi$ being unsatisfiable.



\begin{example*}[Continued]
	Applying an SMT solver to the formula $\phi^{\textsc{free}}$ obtained before, we compute the following valuations for Skolem function template variables $c_{2, 1} = 0, c_{2, 2} = 1, c_{3, 1} = 1, c_{3, 2} = 1$. Hence, $f_2 = x_1$ and $f_3 = 1 + x_1$ define valid Skolem functions for $x_2$ and $x_3$ which prove the satisfiability of the original formula $\phi$. We
	provide a full example of the NRA settings in the Appendix.
\end{example*}

The following theorem establishes soundness, semi-completeness and a complexity bound for our algorithm. The soundness result states that, if our algorithm computes Skolem functions for the existential variables, then they are guaranteed to be correct. The semi-completeness result states that, if there exist polynomial Skolem functions of degree at most $D$ that witness formula satisfiability, then our algorithm is guaranteed to find them.

\begin{theorem}[Soundness, Semi-Completeness, Complexity, proof in the Appendix]\label{thm:lra}
	The algorithm is sound and semi-complete, both for quantified formulas in LRA and NRA. It runs in subexponential time and polynomial space, parametrized by the template polynomial degree $D$.
\end{theorem}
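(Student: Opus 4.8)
The plan is to prove the three properties of Theorem~\ref{thm:lra}---soundness, semi-completeness, and the complexity bound---separately, following the three-step structure of the algorithm and reducing each to the guarantees provided by the algebraic tools invoked in Steps~1--3.

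\textbf{Soundness.} First I would argue soundness by composing the soundness of each individual step. The key observation is that each transformation only weakens satisfiability in a controlled direction: if $\phi^{\textsc{free}}$ is satisfiable, then so is $\phi$. Concretely, I would show that a model of $\phi^{\textsc{free}}$ (i.e.\ concrete values for the template variables $c_{i,j}$ and the auxiliary variables introduced by Farkas/Handelman/Putinar) yields (i) by the \emph{soundness direction} of Farkas' Lemma and the Positivstellens\"atze, a certificate that every clause $\psi_i$ holds for all universally quantified valuations, hence that $\phi^{\textsc{univ}}$ is satisfiable; and (ii) by instantiating each Skolem template $f_i(U_i)$ with the recovered coefficients, a collection of genuine semi-algebraic (indeed polynomial) Skolem functions witnessing $\phi$. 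Here I would lean on the definition of Skolem functions from the Preliminaries, which guarantees that replacing each existential $x_i$ by a function of the preceding universal variables preserves the truth of the formula. The one subtlety to check carefully is the CNF-to-entailment rewriting in Step~2 (the transformation of each clause $\psi_{i,1}\lor\dots\lor\psi_{i,w_i}$ into $\lnot\psi_{i,1}\land\dots\land\lnot\psi_{i,w_i-1}\Rightarrow\psi_{i,w_i}$): I must verify this is a genuine logical equivalence over the reals, including the handling of strict versus non-strict inequalities when negating atoms, so that no soundness is lost in the conversion.

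\textbf{Semi-completeness.} Next I would prove semi-completeness, which is the converse-flavored guarantee: if there exist polynomial Skolem functions of degree at most $D$ witnessing satisfiability of $\phi$, then the algorithm finds them. The argument is that such witnessing Skolem functions correspond to a particular assignment of the template variables $c_{i,j}$; substituting them makes $\phi^{\textsc{univ}}$ a true universally quantified statement, i.e.\ a valid polynomial entailment. The crux is then the \emph{completeness direction} of the algebraic certificates. For LRA this is clean, since Farkas' Lemma is sound \emph{and complete}, so a valid linear entailment always admits the required non-negative combination and $\phi^{\textsc{free}}$ is satisfiable. For NRA the situation is more delicate: Putinar's and Handelman's theorems are only sound and \emph{semi-complete} (they require, e.g., strict positivity and an Archimedean/compactness condition, and certificates of bounded degree need not exist for every valid entailment). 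I anticipate this is the main obstacle, and I expect the theorem statement's notion of semi-completeness to be defined precisely so as to match exactly the representability guaranteed by these Positivstellens\"atze---i.e.\ ``semi-complete'' means complete relative to the existence of a bounded-degree positivity certificate \emph{and} a degree-$D$ Skolem witness. I would therefore state the semi-completeness claim carefully, making explicit the parametrization by $D$ and by the certificate degree, and then invoke the corresponding theorems (deferred to the Appendix) to close the argument.

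\textbf{Complexity.} Finally I would establish the complexity bound. The reduction in Steps~1--2 is polynomial in the input size and in the template size: fixing degree $D$ over $n$ variables produces $O(n^D)$ monomials per template, and the Farkas/Positivstellens\"atze transformations generate a system $\phi^{\textsc{free}}$ of polynomial (in)equalities whose size is polynomially bounded in these quantities. The dominant cost is Step~3, solving the resulting existential system over the reals. I would cite the subexponential-time algorithm for the existential theory of the reals \cite{grigor1988solving}, whose running time is singly exponential in the number of variables but subexponential in the total bit-size/degree parameters, and whose space usage is polynomial; plugging in the polynomially-bounded size of $\phi^{\textsc{free}}$ yields the claimed subexponential time and polynomial space, parametrized by $D$. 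The care here is purely bookkeeping: tracking how $n$, $D$, and the certificate degrees enter the size of $\phi^{\textsc{free}}$ so that the final bound is stated with the correct parametrization rather than absorbing $D$ into the asymptotics incorrectly.
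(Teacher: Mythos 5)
Your soundness and complexity arguments follow essentially the same route as the paper's: soundness by composing the soundness of Skolemization (semi-algebraicity of Skolem functions), of the Farkas/Positivstellens\"atze transformations, and of the quantifier-free solver; complexity by observing that Steps~1--2 are a polynomial-size reduction (parametrized by $D$) and that Step~3 invokes the subexponential-time, polynomial-space decision procedure for the existential theory of the reals \cite{grigor1988solving,canny1988some}. Your extra care about the CNF-to-entailment rewriting is reasonable but not where the paper's proof puts its weight.

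The genuine gap is in semi-completeness for NRA. You correctly identify the obstacle---Putinar's and Handelman's theorems require an Archimedean/compactness hypothesis and a strict conclusion, so a valid degree-$D$-Skolemized entailment need not admit any certificate of their form---but your resolution is to redefine semi-completeness so that it matches exactly what Putinar/Handelman can represent. That proves a strictly weaker statement than the theorem claims: under your reading, the algorithm may fail on a formula that \emph{does} admit degree-$D$ polynomial Skolem functions (for instance, one whose clause hypotheses define an unbounded set, so no Archimedean condition is available), whereas the theorem, as glossed in the main text, promises success whenever such Skolem functions exist, with the only relativization being the degree bound. The paper closes this hole with a different tool: its proof invokes the Krivine--Stengle Positivstellensatz, which characterizes entailments $f \geq 0$, $f > 0$, and $f = 0$ over an \emph{arbitrary} basic closed semi-algebraic set---no compactness, no strictness restriction---via identities such as $f \cdot g = f^{2m} + h$ with $g, h$ in the cone generated by the hypotheses, supplemented by the results of \cite{GoharshadyHMM23,DBLP:conf/pldi/AsadiC0GM21} for mixed strict/non-strict inequalities. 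With this, the only parameter left in the semi-completeness claim is the quantitative bound $D$ on the cone elements and on $m$, not a topological side condition on the satisfying set; Putinar and Handelman appear in the paper only as practically efficient special cases used by the implementation, not as the basis of the completeness argument. The paper's appendix states this explicitly: the earlier Positivstellens\"atze are not enough for completeness, which is precisely the point your proposal concedes rather than overcomes.
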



\paragraph{Witnesses for Existentially Quantified Variables} To conclude this section, we highlight that when the algorithm solves the system of polynomial constraints $\phi^{\textsc{free}}$ in Step~3, it computes concrete real values for template polynomial coefficients in Skolem functions constructed in Step~1. Hence, it computes {\em witnesses of satisfiability for existentially quantified variables}, which was one of the goals of our algorithm design that is important in many applications, as discussed in the Introduction.

\begin{table*}[htp]
	\centering
	\caption{Summary of the experimental results. Each row shows the performance of different tools on the set of benchmarks specified in the first column. For each solver, the \texttt{Solved} column shows the number of instances solved with the best results shown in bold, the \texttt{Avg. T.} column shows the average runtime (in seconds), and the \texttt{U.} column presents the number of unique instances solved by the tool (i.e.~instances that were solved only by that tool and no other tool). The timeout for each tool on each benchmark is set to 10~minutes.}
	\scriptsize{
	\texttt{
	\begin{tabularx}{\linewidth}{c|c|c|c|c|c|c|c|c|c|c|c|c|}
		\cline{2-13}
					& \multicolumn{3}{c|}{QuantiSAT} & \multicolumn{3}{c|}{Z3-uSk}	&  \multicolumn{3}{c|}{Z3-tSk}	&  \multicolumn{3}{c|}{CVC5-tSk} \\
		\cline{2-13}
					& 	Solved	& Avg. T. & U. &	Solved 	& Avg. T. & U.	&  	Solved & Avg. T. & U.	&  	Solved & Avg. T. & U. \\
		\cline{1-13}
		\multicolumn{1}{|c|}{Keymaera-LRA} 	&	\textbf{222}&	0.04	& 0		& 202 	&	0.01	&	0	& \textbf{222}	&	0.01	&	0	& 210 	&	0.01	& 0	\\
		\multicolumn{1}{|c|}{Mjollnir} 		&   \textbf{877}& 	18.48 	& 157	& 	476	&	14.99 	&	76	& 593			&	23.34 	&	2	& 397 	&	38.25	& 15\\
		\multicolumn{1}{|c|}{Keymaera-NRA} 	&	\textbf{503}&	0.06	& 22	& 481	&	0.01	&	0	& 483			&	0.03	&	2	& 481	&	0.01	& 0	\\
		\multicolumn{1}{|c|}{PolySynth} 	& 	\textbf{30} &	4.00	& 4		& 20  	&	3.44 	&	0	& 28  			&	11.16 	&	1	& 6   	& 13.74 	& 0	\\ 
		\cline{1-13}
	\end{tabularx}}}
	\label{table:experiments}
\end{table*}
\begin{figure*}[htp]
	\centering
	\subfigure{\includegraphics[scale=0.23]{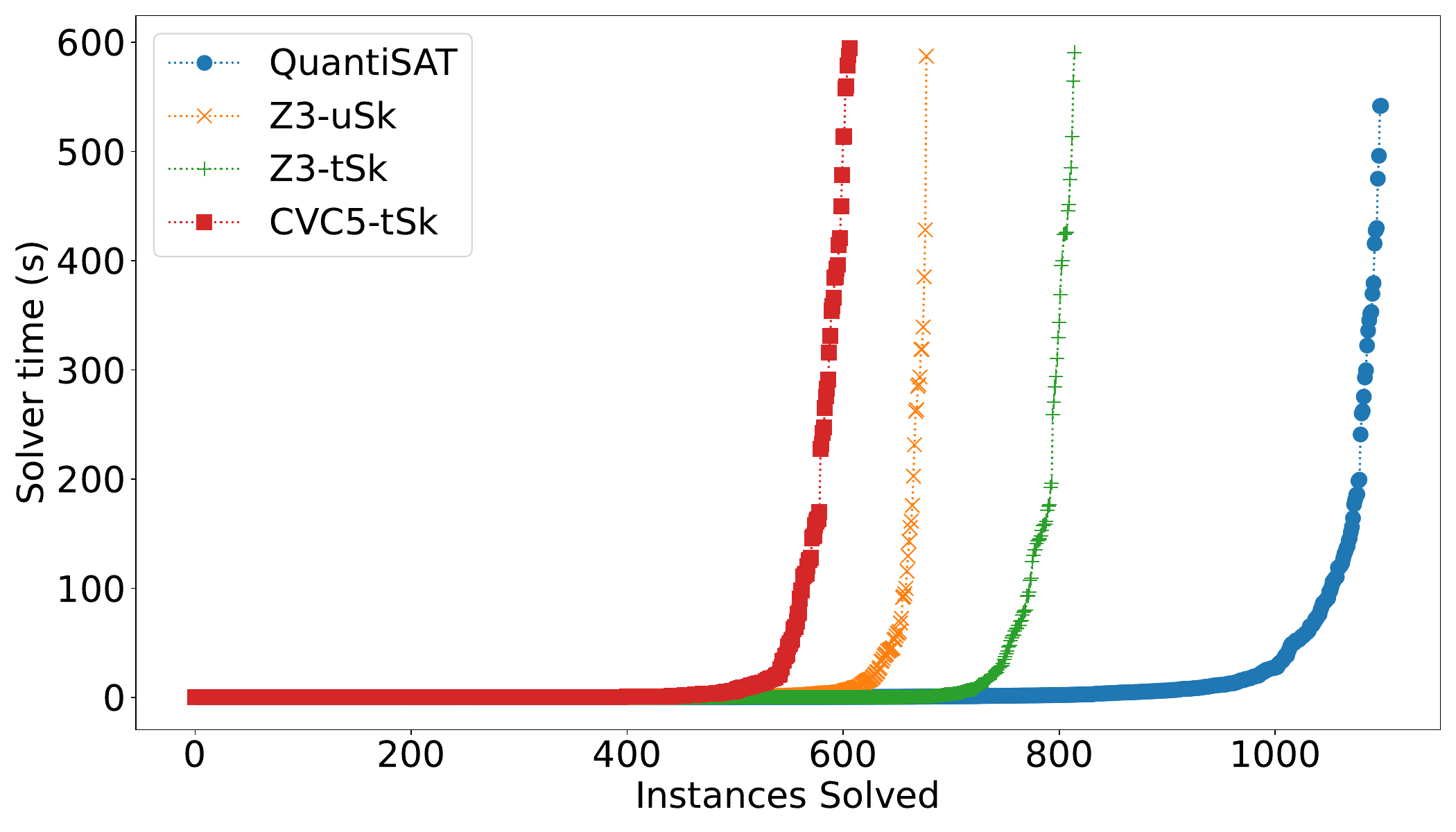}}\quad
	\subfigure{\includegraphics[scale=0.23]{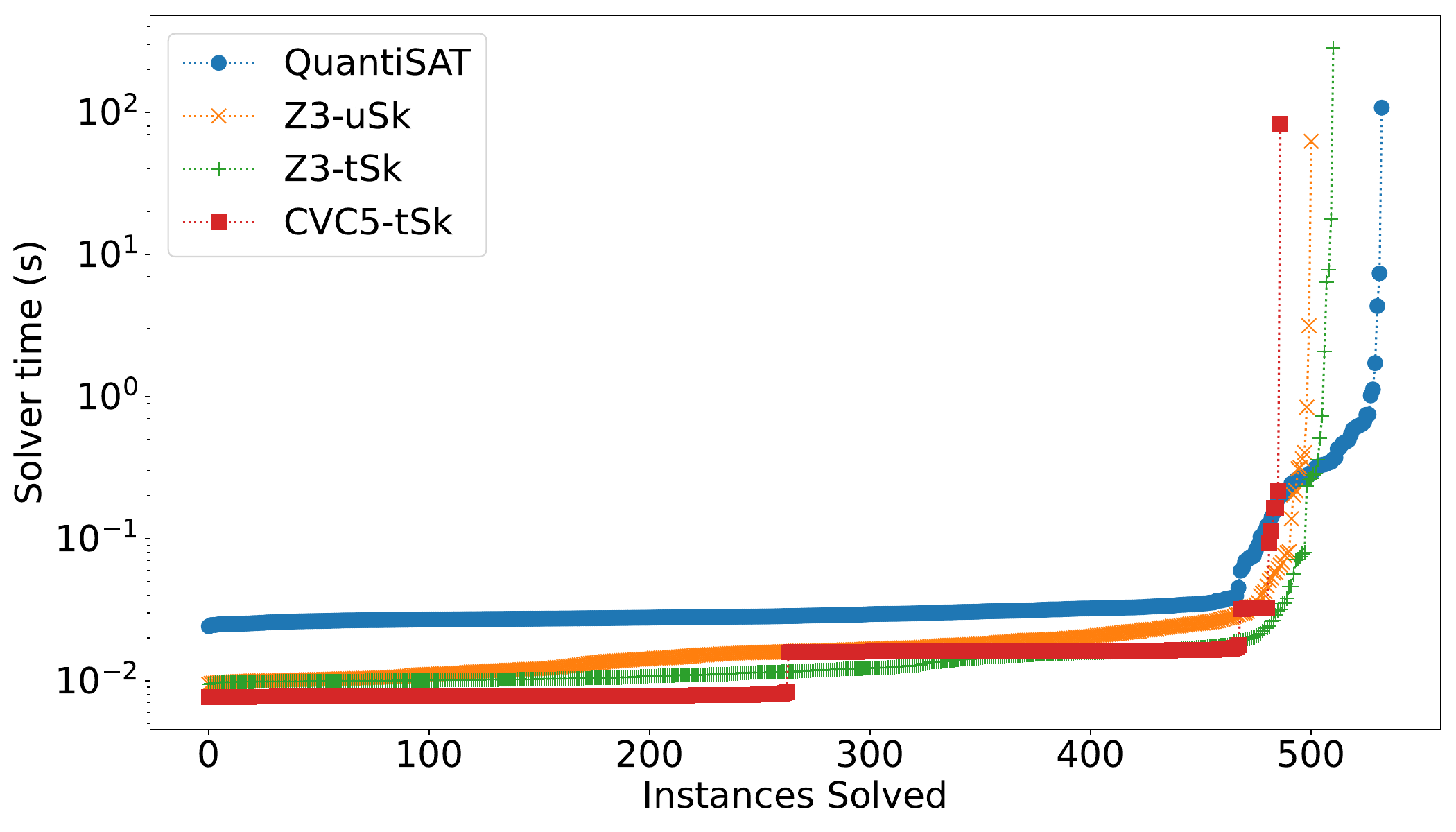}}
	\caption{Runtime distribution over the LRA (left) and NRA (right) benchmarks solved by each tool. The figure to the right is in logarithmic scale for better presentation. Generally, faster tools have lower curves and the more instances a tool solves, its curve will be more to the right. Since solving more instances is primary, tools with graphs to the right are preferable.}
	\label{fig:exp-times}
\end{figure*}
\section{Experimental Results}

We implemented a prototype of our method for satisfiability checking in a tool called QuantiSAT\footnote{https://doi.org/10.5281/zenodo.13341655}, and compared it against two state-of-the-art SMT solvers that support satisfiability checking for quantified LRA and NRA formulas. The goal of our experiments is to answer the following two research questions: (1)~How well does our method perform in comparison to the existing tools, in terms of the number of solved instances and runtime? (2) Is our method able to compute witnesses for existentially quantified variables?

\paragraph{Implementation Details} Our tool QuantiSAT is written in Python and it uses PolyHorn~\cite{chatterjee2024polyhorn} as a back-end tool for applying the Positivstellens\"atze theorems in Step~2 of the algorithm. It uses Z3~\cite{MouraB08} and MathSAT5 \cite{mathsat5} for solving the quantifier-free formulas derived in Step~3 of the algorithm. For the experiments, we used a Debian 12 machine with a 2.45GHz AMD EPYC 7763 CPU and 16 GB of RAM. In our experiments, we ran QuantiSAT with polynomial degrees for Skolem function templates equal to $D \in \{0,1,2\}$.


\paragraph{Benchmarks} We consider three benchmark suites of quantified formulas in LRA and NRA:
\begin{compactenum}
	\item The Keymaera benchmark suite, taken from SMT-COMP \cite{bobot18th}, contains 222 LRA formulas and 3813 non-linear formulas. However, many of the non-linear formulas are actually not expressible in NRA, as they contain the division operator which is not supported by NRA. Removing these results in 511 NRA formulas.
	\item The Mjollnir \cite{Monniaux10} benchmark suite consists of 3600 LRA formulas. This benchmark suite was used for the evaluation of the tool Mjollnir, which was later outperformed by Z3 \cite{BjornerJ15}, hence we do not include the Mjollnir tool in our results. 
	\item The PolySynth \cite{GoharshadyHMM23} benchmark suite consists of 32 NRA formulas. PolySynth is a program synthesis tool, which reduces the program synthesis problem to computing a satisfying assignment for a quantified formula in NRA. In our evaluation, we collected 32 quantified formulas that arise in their program synthesis procedure and used them to evaluate the effectiveness of our method and the baselines on these NRA benchmarks.
\end{compactenum} 
Since our method requires a quantifier-free part of the input formula to be provided in CNF (recall the assumptions in the Algorithm section), we first converted each quantifier-free part into CNF and then provided the CNF formulas as input to our tool and the baselines. The conversion time to CNF is not considered in the runtimes presented in Table~\ref{table:experiments}. 

\paragraph{Experimental Setup and Baselines} We compare QuantiSAT against two state-of-the-art SMT solvers Z3~\cite{MouraB08} and CVC5~\cite{cvc5}. The timeout for each tool on each benchmark is set to 10~minutes. Since we are not only interested in satisfiability checking but also in computing witnesses for existentially quantified variables that prove their satisfiability, in order to instruct Z3 and CVC5 to compute these witnesses, we provide them with two different variants of our benchmarks:
\begin{compactitem}
	\item {\bf Uninterpreted Skolemization.} In the first variant, we only require the baselines to compute {\em some witnesses} for existentially quantified variables, not necessarily being polynomial expressions. Hence, we replace each existentially quantified variable by an uninterpreted predicate over the preceding universally quantified variables. For Z3, we denote the resulting baseline by \texttt{Z3-uSk}. Since CVC5 does not support uninterpreted predicates, we could not evaluate it on this variant. Our goal here is to evaluate the effectiveness of our template-based Skolemization as opposed to general Skolem functions.
	
	\item {\bf Template Skolemization.} In the second variant, we ask our baselines to compute witnesses for existentially quantified variables {\em in terms of polynomials} over universally quantified variables. Hence, rather than only using uninterpreted predicates, in this variant we use the same polynomial Skolem function templates as in our QuantiSAT. We consider polynomial degrees $D \in \{0,1,2\}$, and count each instance as solved by the baseline if it can be solved for at least one of these three polynomial degrees. For Z3, we denote the resulting baseline by \texttt{Z3-tSk}. For CVC5, we denote the resulting baseline by \texttt{CVC5-tSk}. Our goal here is to evaluate the effectiveness of our quantifier elimination method based on Positivsellens\"atze theorems, as opposed to other quantifier elimination procedures implemented in these SMT solvers.
	
\end{compactitem}

\paragraph{Results on LRA Benchmarks} 
The first two rows of Table~\ref{table:experiments} summarize our experimental results on LRA benchmarks. It can be seen that average runtimes of all tools are comparable and quite small (differing by only a few seconds). The most important highlight of the table is the number of instances  and the unique instances solved by each tool, whereas runtimes of our and competing tools are the secondary aspect. We summarize our results on the LRA benchmarks below:
\begin{compactitem}
	\item {\bf Successful Instances.} (i) Our tool successfully solves all LRA benchmarks from the Keymaera benchmark suite, while \texttt{Z3-uSk} and \texttt{CVC5-tSk} fail on several cases. The fact that QuantiSAT and \texttt{Z3-tSk} solve all instances shows that the template-based approach to Skolemization provides an efficient and highly promising approach to quantifier elimination in LRA. (ii) On the Mjollnir benchmarks, QuantiSAT outperforms all the baselines with a gap of at least 284 instances, while solving 157 unique instances. On the other hand, \texttt{Z3-uSk} solves 76 unique instances that require complicated, e.g. heavily piecewise, interpretations for the Skolem functions. 
	\item {\bf Runtimes.} (i) The average runtime of all the considered tools is very small and negligible on the Keymaera benchmarks. (ii) On the Mjollnir benchmarks, the average runtime of our tool is lower than \texttt{Z3-tSk} and \texttt{CVC5-tSk} which means that applying Positivstellens\"atze is a crucial step for efficient quantifier elimination. Compared with \texttt{Z3-uSk}, although the average runtime of \texttt{Z3-uSk} is lower, our tool can solve in 2.45s the same number of instances that \texttt{Z3-uSk} solves in 600s. 
\end{compactitem}
Figure \ref{fig:exp-times}(left) shows the runtime distribution of each of the tool's runtimes over all the LRA benchmarks that were solved by that tool. QuantiSAT has the lowest and rightmost curve, which shows that it can solve more instances in less time compared to the baselines. These results show practical efficiency and applicability of our method to the satisfiability checking problem for quantified formulas in LRA.

\paragraph{Results on NRA Benchmarks}
The last two rows of Table~\ref{table:experiments} summarize our results on NRA benchmarks. Similar to the LRA results, the average runtimes of the tools are small and comparable. The most interesting distinction comes in the number of instances and unique instances solved by tools: 
\begin{compactitem}
	\item {\bf Successful Instances.}  (i) On the NRA benchmarks from the Keymaera benchmark suite, our tool outperforms the baselines by solving 98\% of the instances, which includes 22 instances uniquely solved by our tool. We believe that the high success rate is due to the strong semi-completeness guarantees provided by our method. (ii) On the benchmarks from the PolySynth benchmark suite, QuantiSAT outperforms all the baselines by solving 30 instances, including 4 unique ones.
	\item {\bf Runtimes.}  (i) On the Keymaera NRA benchmarks, the average runtime of QuantiSAT is higher than the baselines, however this is due to the longer runtimes required by the benchmarks solved only by our tool. QuantiSAT solves as many benchmarks as all the baselines in only 0.32 seconds. (ii) Comparing runtimes on the PolySynth benchmarks, \texttt{Z3-uSk} and \texttt{CVC5-tSk} have smaller average runtimes, however this is again due to the instances solved only by our tool. The only comparable baseline in terms of solved instances is \texttt{Z3-tSk}, whose average runtime is more than twice the runtime of QuantiSAT. 
\end{compactitem}
Figure \ref{fig:exp-times}(right) shows the runtime distribution of the NRA instances solved by each tool. 

\paragraph{Summary of Results} Based on the above discussions we conclude that, both on the LRA and the NRA benchmarks, QuantiSAT outperforms state-of-the-art and well-maintained tools such as Z3 and CVC5 on the number of solved instances when required to compute witnesses for existentially quantified variables. Furthermore, QuantiSAT is able to solve a significant number of new instances that other tools could not handle. Finally, this is achieved at improved average runtimes, as discussed above. All of this leads us to the conclusion that our method provides a significant step forward in satisfiability checking for quantified formulas in LRA and NRA, as well as in computing the witnesses. This is particularly due to the new quantifier elimination procedure that we propose. Given that the tool support for quantifier elimination (especially in NRA) is limited, we believe that our method provides important new ideas and breakthroughs in satisfiability checking for quantified formulas.

\section{Conclusion}\label{sec:conclusion}

We presented a novel method for satisfiability checking for quantified formulas in LRA and NRA. Our method is based on a novel and efficient quantifier elimination procedure. The method is sound, semi-complete, and runs in parameterized subexponential time and polynomial space. In contrast, previous sound and complete procedures have doubly-exponential time and at least exponential space complexity. Our method is also able to compute witnesses for existentially quantified variables, which is important for many applications. We implemented our method in a prototype tool called QuantiSAT. Our QuantiSAT outperforms two state-of-the-art SMT solvers on the number of solved instances and is able to prove a significant number of new instances that other tools could not handle. Hence, we believe that our method provides a significant step forward in satisfiability checking for quantified formulas in LRA and NRA as well as in efficient algorithms for quantifier elimination.

Our work opens several interesting future work directions. First, it would be interesting to consider further improvements to the template-based Skolemization method proposed in this work. These could include more general templates, such as {\em piecewise} linear and polynomial expressions. Second, it would be interesting to study the applicability of the template-based Skolemization technique to quantifier elimination in other theories, beyond LRA and NRA.

%

\bibliography{ref}
\newpage\clearpage
\appendix
\section{Background and Theorems on Satisfiability of Polynomial Entailment}

In this section, we provide an overview of the mathematical concepts and results on satisfiability of polynomial entailment, that we use in Step~2 of our algorithm







\subsection{Postivstellens\"atze}

Positivstellens\"atze are classical theorems from real algebraic geometry that provide an algorithmic approach to the problem of determining whether a polynomial is positive over a semi-algebraic set. Before presenting the specific results we used, it is important to note that there are various generalizations of Positivstellens\"atze, each applicable in different contexts. In our work, we focus on the versions that are most relevant and efficient for the development of our algorithm.

We begin with Farkas’ lemma, which characterizes the non-negativity of a linear expression over a set of linear inequalities.

\begin{theorem}[{Farkas' Lemma \cite{farkas1902theorie}}]{theorem}{Farkas}
    \index{Farkas' lemma}
    Consider a set $V = \{x_1, \ldots, x_r\}$ of real-valued variables and the following system $\Phi$ of equations over $V$:
        \begin{equation*}
            \Phi := 
            \begin{cases}
                a_{1,0} + a_{1,1} \cdot x_1 + \ldots + a_{1,r} \cdot x_r \ge 0\\
                \hfil \vdots\\
                a_{m,0} + a_{m,1} \cdot x_1 + \ldots + a_{m,r} \cdot x_r \ge 0\\
            \end{cases}.
        \end{equation*}
         When $\Phi$ is satisfiable, it entails a linear inequality $$\psi := c_0 + c_1 \cdot x_1 + \dots + c_r \cdot x_r \ge 0$$
        if and only if $\psi$ can be written as non-negative linear combination of the inequalities in $\Phi$ and the trivial inequality $1 \ge 0,$ i.e.~if there exist non-negative real numbers $y_0,\dots,y_m$ such that 
        $$
        \begin{matrix}
        c_0 = y_0 + \sum_{i=1}^k y_i \cdot a_{i, 0};\\
        c_1 = \sum_{i=1}^k y_i \cdot a_{i, 1};\\
        \vdots\\
        c_r = \sum_{i=1}^k y_i \cdot a_{i, r}.
        \end{matrix}
        $$
        Moreover, $\Phi$ is unsatisfiable if and only if $-1 \ge 0$ can be derived as above. \
\end{theorem}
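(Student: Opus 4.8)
The plan is to prove the two directions of the equivalence separately, and then treat unsatisfiability as a variant of the same argument. The easy direction requires essentially no work: given nonnegative reals $y_0,\dots,y_m$ satisfying the displayed coefficient identities, for any valuation $x$ that satisfies $\Phi$ we may rewrite $c_0 + \sum_{j=1}^r c_j x_j = y_0 + \sum_{i=1}^m y_i\bigl(a_{i,0} + \sum_{j=1}^r a_{i,j} x_j\bigr)$, where each inner parenthesis is $\ge 0$ because $x$ satisfies $\Phi$ and each $y_i \ge 0$; hence $\psi(x) \ge 0$, so $\Phi$ entails $\psi$. Note this direction does not even use satisfiability of $\Phi$.

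For the hard direction, I would translate the existence of the Farkas coefficients into a membership statement about a convex cone. Writing $\mathbf{a}_i = (a_{i,0},\dots,a_{i,r})$, $\mathbf{c} = (c_0,\dots,c_r)$, and $\mathbf{e}_0 = (1,0,\dots,0)$ in $\mathbb{R}^{r+1}$, the sought certificate exists precisely when $\mathbf{c}$ lies in the cone $C := \{\, \sum_{i=1}^m y_i \mathbf{a}_i + y_0 \mathbf{e}_0 : y_0,\dots,y_m \ge 0 \,\}$ generated by the constraint vectors together with the trivial-inequality vector. So it suffices to show that if $\Phi$ is satisfiable and $\Phi$ entails $\psi$, then $\mathbf{c} \in C$. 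I would argue by contraposition: assuming $\mathbf{c} \notin C$, I invoke the separating hyperplane theorem to obtain a vector $\mathbf{z} = (z_0,\dots,z_r)$ with $\mathbf{z}\cdot \mathbf{g} \ge 0$ for every generator $\mathbf{g}$ but $\mathbf{z}\cdot \mathbf{c} < 0$. In particular $z_0 \ge 0$ (from $\mathbf{e}_0$) and $a_{i,0} z_0 + \sum_j a_{i,j} z_j \ge 0$ for all $i$.

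The crux is then a short case analysis that converts $\mathbf{z}$ into a concrete violation of the entailment. If $z_0 > 0$, rescaling so that $z_0 = 1$ exhibits the point $(z_1,\dots,z_r)$ as a satisfier of $\Phi$ that makes $\psi$ strictly negative, contradicting $\Phi \Rightarrow \psi$. If $z_0 = 0$, then $(z_1,\dots,z_r)$ is a recession direction along which every constraint of $\Phi$ stays nonnegative while $\sum_j c_j z_j < 0$; starting from any satisfier $x^0$ of $\Phi$ (which exists exactly by the satisfiability hypothesis) and moving as $x^0 + t(z_1,\dots,z_r)$ with $t \to \infty$ keeps the point inside $\Phi$ but drives $\psi$ to $-\infty$, again contradicting the entailment. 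This is precisely the step where the satisfiability assumption on $\Phi$ is indispensable. I expect the main obstacle to lie not in the case analysis but in justifying the separation step cleanly: one must know that the finitely generated cone $C$ is closed, which is the content of a Minkowski--Weyl / Carath\'eodory-type lemma and is the only nontrivial piece of convex geometry needed. I would either cite this closedness or prove it by a compactness argument on a normalized generating set.

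Finally, the unsatisfiability characterization follows from the same cone machinery: $\Phi$ is infeasible exactly when there are nonnegative $y_i$ with $\sum_i y_i a_{i,j} = 0$ for all $j \ge 1$ and $\sum_i y_i a_{i,0} < 0$, i.e.~when $-1$ (after scaling) is derivable as a nonnegative combination. This is the homogeneous Farkas alternative and can be obtained by the identical separation argument applied to $\mathbf{e}_0$ relative to the cone generated by the $\mathbf{a}_i$, or deduced as a corollary. As an alternative to the entire hyperplane route, I note that Fourier--Motzkin elimination yields a fully constructive proof: eliminating $x_1,\dots,x_r$ from $\Phi$ augmented with $-\psi > 0$ keeps every derived inequality a nonnegative combination of the originals, and the contradiction reached at the end (since the augmented system is infeasible by $\Phi \Rightarrow \psi$) produces the desired multipliers directly, avoiding the closedness subtlety at the cost of more bookkeeping.
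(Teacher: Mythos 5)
There is no in-paper proof to compare against: the paper states Farkas' lemma as a classical result, citing Farkas (1902), and consumes it as a black box in Step~2 of its algorithm (the appendix proves only the soundness/semi-completeness theorem \emph{using} it). So your proposal must be judged on its own merits, and on those merits it is correct and is the standard convex-geometric proof. The easy direction is the right substitution computation, and you correctly note it needs no satisfiability hypothesis. The hard direction correctly reduces the certificate to membership of $\mathbf{c}$ in the finitely generated cone $C$, separates when membership fails, and the case split on the homogenizing coefficient $z_0$ is exactly right: $z_0>0$ yields a point of $\mathrm{sat}(\Phi)$ falsifying $\psi$, while $z_0=0$ yields a recession direction, which is precisely where satisfiability of $\Phi$ is indispensable (indeed the affine statement fails without it: $\Phi=\{-1\ge 0\}$ over one variable entails $x\ge 0$ vacuously, yet $(0,1)$ is not in the cone generated by $(-1,0)$ and $(1,0)$). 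You also correctly isolate the one genuinely nontrivial ingredient, closedness of a finitely generated cone, which the naive separation step silently requires; flagging it and offering a Carath\'eodory-style justification is exactly the right level of care. Two nits: in the unsatisfiability part the separation should be applied to $-\mathbf{e}_0$ (the vector encoding $-1\ge 0$), not $\mathbf{e}_0$ as written, though your surrounding description makes the intent clear; and in the Fourier--Motzkin alternative you should note explicitly that the multiplier attached to $\neg\psi$ in the final contradiction must be strictly positive --- if it were zero, the derived contradiction would use only rows of $\Phi$, contradicting the satisfiability hypothesis --- before dividing through to extract the Farkas coefficients. Neither gap is structural. Incidentally, your argument also exposes two typographical slips in the paper's own statement (the sums should run to $m$, not $k$, and ``system of equations'' should read ``system of inequalities''), which your cone formulation implicitly corrects.
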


Next, we present Handelman’s theorem, which extends Farkas’ lemma by providing a characterization of the non-negativity of a polynomial over a set of linear inequalities.

In order to present Handelman's theorem, we need to define the notion of a semi-group generated by a set of linear inequalities.

\begin{definition}[Semi-group generated by $\Phi$]
    \label{def:semi_group}
    \index{semi-group}
    Consider a set $V = \{x_1,\dots x_r \}$ of real-valued variables and the following system of linear inequalities over $V$:
    \begin{equation*}
    \Phi := 
    \begin{cases}
        a_{1,0} + a_{1,1} \cdot x_1 + \ldots + a_{1,r} \cdot x_r \bowtie_1  0\\
        \hfil \vdots\\
        a_{m,0} + a_{m,1} \cdot x_1 + \ldots + a_{m,r} \cdot x_r \bowtie_m 0\\
    \end{cases}
\end{equation*}
where $\bowtie_i \in \{>, \geq \}$ for all $1 \leq i \leq m$. Let $g_i$ be the left hand side of the $i$-th inequality, i.e. $g_i(x_1,\dots,x_r) := a_{i,0} + a_{i, 1} \cdot x_1 + \dots a_{i,r} \cdot x_r$. The \emph{semi-group} of $\Phi$ is defined as: 
   $$
\textstyle \monoid(\Phi) := \left\{ \prod_{i=1}^m g_i^{k_i} \mid m \in \mathbb{N} ~\wedge~\forall i ~\; k_i \in \mathbb{N} \cup \{0\} \right\}.
$$
We define $\monoid_d(\Phi)$ as the subset of polynomials in $\monoid(\Phi)$ of degree at most $d.$
\end{definition}
    
    \begin{theorem}[{Handelman's Theorem \cite{handelman1988representing}}]{theorem}{Handelman}\label{thm:handel}
        \index{Handelman's theorem}
    Consider a set $V = \{x_1, \dots, x_r \}$ of real-valued variables and the following system of equations over $V$:
         \begin{equation*}
            \Phi := 
            \begin{cases}
                a_{1,0} + a_{1,1} \cdot x_1 + \ldots + a_{1,r} \cdot x_r \geq  0\\
                \hfil \vdots\\
                a_{m,0} + a_{m,1} \cdot x_1 + \ldots + a_{m,r} \cdot x_r \geq 0\\
            \end{cases}.
        \end{equation*}
    If $\Phi$ is satisfiable, $\sat(\Phi)$ is compact, and $\Phi$ entails a polynomial inequality $g(x_1,\dots,x_r) > 0,$ then there exist non-negative real numbers $y_1,\dots y_u$ and polynomials $h_1,\dots,h_u \in \monoid(\Phi)$ such that:
    $$
    \textstyle g = \sum_{i=1}^u y_i \cdot h_i.
    $$
    \end{theorem}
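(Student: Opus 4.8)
The plan is to derive Handelman's representation from \emph{Pólya's theorem}, the classical Positivstellensatz on the simplex: if $F \in \mathbb{R}[y_1,\dots,y_m]$ is homogeneous and strictly positive on the standard simplex $\Delta = \{y \in \mathbb{R}^m_{\ge 0} : \sum_i y_i = 1\}$, then for some $N \in \mathbb{N}$ all coefficients of $(\sum_i y_i)^N \cdot F$ are nonnegative, i.e.\ $(\sum_i y_i)^N F = \sum_\alpha c_\alpha\, y^\alpha$ with every $c_\alpha \ge 0$. (Pólya's theorem is itself provable by a compactness/Bernstein-polynomial argument, which I would cite rather than reprove.) The entire difficulty is to massage the hypotheses on $g$ and $\Phi$ into an instance of Pólya and then substitute $y_i = g_i(x)$ to recover a representation over $\monoid(\Phi)$.

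The first step uses compactness crucially. Writing $g_i(x) = a_{i,0} + a_i \cdot x$ with $a_i \in \mathbb{R}^r$, I would observe that compactness of $\sat(\Phi)$ is equivalent to triviality of the recession cone $\{v : a_i \cdot v \ge 0 \text{ for all } i\} = \{0\}$. Since this cone contains $\ker A$ (the matrix $A$ with rows $a_i$), the $a_i$ already span $\mathbb{R}^r$; and by an LP-duality (Stiemke-type) transposition argument the same triviality yields strictly positive $\lambda_1,\dots,\lambda_m > 0$ with $\sum_i \lambda_i a_i = 0$. The latter makes $\sum_i \lambda_i g_i$ a constant $c$, which is strictly positive on the nonempty polytope, so after rescaling I may assume the normalization $\sum_i \lambda_i g_i \equiv 1$; the map $x \mapsto (\lambda_1 g_1(x),\dots,\lambda_m g_m(x))$ then sends $\sat(\Phi)$ into $\Delta$. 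The spanning of the $a_i$ together with $1 \in \mathrm{span}\{g_i\}$ lets me express every coordinate $x_j$, hence $g$ itself, as a polynomial in $g_1,\dots,g_m$; using $\sum_i \lambda_i g_i \equiv 1$ to homogenize low-degree terms, I obtain a homogeneous $\tilde G$ of degree $\deg g$ with $g(x) = \tilde G(g_1(x),\dots,g_m(x))$ as a polynomial identity.

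The main obstacle is that Pólya needs $\tilde G > 0$ on \emph{all} of $\Delta$, whereas $g > 0$ on $\sat(\Phi)$ only gives positivity of $\tilde G$ on the image slice $\{(\lambda_i g_i(x))_i : x \in \sat(\Phi)\}$, an $r$-dimensional piece of the $(m-1)$-dimensional simplex. I would resolve this by exploiting the freedom in choosing the representative $\tilde G$: let $\ell_1,\dots,\ell_s$ span the linear forms vanishing on the affine image of the map, so that $q := \sum_k \ell_k^2 \cdot (\sum_i \lambda_i y_i)^{\deg g - 2}$ is nonnegative on the orthant, vanishes exactly on the slice, and satisfies $q(g_1(x),\dots,g_m(x)) \equiv 0$. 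Adding $\rho \cdot q$ to $\tilde G$ alters neither the values on the slice nor the identity $g = \tilde G(g_1,\dots,g_m)$, and a standard compactness argument on $\Delta$ shows that for $\rho$ large enough $\tilde G + \rho q$ is strictly positive on all of $\Delta$. Applying Pólya to this modified $\tilde G$, substituting $y_i = g_i(x)$, and collapsing the factor $(\sum_i \lambda_i g_i)^N \equiv 1$ then yields exactly $g = \sum_\alpha c_\alpha \prod_i g_i^{\alpha_i}$ with $c_\alpha \ge 0$, i.e.\ the claimed representation with each $\prod_i g_i^{\alpha_i} \in \monoid(\Phi)$.

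Finally I would dispatch the degenerate cases: if $\sat(\Phi)$ is not full-dimensional I restrict to its affine hull and run the argument in fewer variables, and if $\deg g \le 1$ I first replace $g$ by the functionally equal $g \cdot (\sum_i \lambda_i g_i)^2$ to legitimize the $(\sum_i \lambda_i y_i)^{\deg g - 2}$ factor. I expect the extension step of the third paragraph — certifying that the chosen representative can be pushed to be positive on the entire ambient simplex while preserving the identity on the slice — to be the technically most delicate point, since it is precisely where the gap between ``positive on the polytope $\sat(\Phi)$'' and ``positive on the ambient simplex $\Delta$'' must be bridged before Pólya can be invoked.
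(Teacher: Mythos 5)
You should first note that the paper itself contains no proof of this statement: Handelman's theorem is quoted as a classical black box, cited to Handelman (1988), and used inside the appendix's soundness/semi-completeness argument, so your proposal can only be judged against the literature. Your route --- obtain strictly positive $\lambda_1,\dots,\lambda_m$ with $\sum_i \lambda_i a_i = 0$ from triviality of the recession cone, normalize $\sum_i \lambda_i g_i \equiv 1$, rewrite $g$ as a homogeneous form $\tilde G$ in the $g_i$'s, extend positivity from the simplex slice to all of $\Delta$ by adding $\rho\, q$ with $q = \sum_k \ell_k^2 \cdot (\sum_i \lambda_i y_i)^{\deg g - 2}$ from the relation ideal, and then invoke P\'olya --- is exactly the standard modern derivation of Handelman's theorem (essentially the Powers--Reznick argument for polynomials positive on polyhedra), and its main line is correct; Handelman's original proof went through entirely different, representation-theoretic machinery for ordered rings. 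The duality step checks out (the cone generated by the $a_i$ is all of $\mathbb{R}^r$, hence contains $-\sum_i a_i$, yielding the strictly positive $\lambda$), as does the $\deg g \le 1$ patch. One fact you use silently and should state: the vanishing locus of $q$ inside $\Delta$ equals the image of $\sat(\Phi)$, not merely contains it, which is what makes $\tilde G > 0$ available exactly where $q = 0$. This is true --- a point $y \in \Delta$ in the affine span of the image pulls back to some $x$ with $\lambda_i g_i(x) = y_i \ge 0$, hence $x \in \sat(\Phi)$ --- but it is precisely the reason the extension trick works and deserves a line.

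The genuine gap is the degenerate case, which your closing sentence understates. The statement assumes only that $\sat(\Phi)$ is nonempty and compact; when it is not full-dimensional, your normalization can fail outright: the strictly positive $\lambda$ may give $\sum_i \lambda_i g_i \equiv c$ with $c = 0$ (e.g.\ $g_1 = x$, $g_2 = -x$, $\sat(\Phi) = \{0\}$), so there is no rescaling to $\equiv 1$. Moreover, ``restrict to the affine hull and run the argument in fewer variables'' does not finish the proof, because the representation obtained that way is a polynomial identity on the affine hull only, while the theorem demands $g = \sum_i y_i h_i$ as an identity in $\mathbb{R}[x_1,\dots,x_r]$. The missing lifting argument runs as follows: when $c = 0$ one gets a genuine polynomial identity $\sum_i \mu_i g_i = 0$ with $\mu \ge 0$, $\mu \neq 0$, hence $-g_i = \sum_{j \neq i} (\mu_j/\mu_i)\, g_j$ for each $i$ with $\mu_i > 0$, so both $\pm g_i$ lie in the Handelman cone; since the $a_i$ span $\mathbb{R}^r$, every polynomial is a difference of cone elements, and therefore the entire ideal generated by $g_i$ sits inside the cone, which lets you absorb the correction terms when pulling a representation back from the hyperplane $\{g_i = 0\}$ and close the case by induction on dimension. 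Without this supplement (or an added full-dimensionality hypothesis), your sketch establishes the theorem only for full-dimensional $\sat(\Phi)$.
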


Finally, we present Putinar’s Positivstellensatz, which extends both Farkas’ lemma and Handelman’s theorem by providing a characterization of the non-negativity of a polynomial over a set of polynomial inequalities.

    \begin{theorem}[Putinar's Positivstellensatz \cite{putinar1993positive}]\label{theorem_put_pos}
        \index{Putinar's positivstellensatz}
        Given a finite collection of polynomials $\{g,g_1,\dots,g_k \} \in \mathbb{R}[x_1,\dots,x_n]$, let $\Phi$ be the set of inequalities defined as
        $$
        \Phi: \{g_1 \geq 0,  \dots , g_k \geq 0 \}.
        $$
        If $\Phi$ entails the polynomial inequality $g > 0$ and there exist some $i$ such that $\sat(g_i \geq 0)$ is compact, then there exist polynomials $h_0, \dots, h_k \in \mathbb{R}[x_1,\dots,x_n]$ such that 
        $$
        \textstyle g = h_0 + \sum_{i=1}^{m} h_i \cdot g_i 
        $$
        and every $h_i$ is a sum of squares. Moreover, $\Phi$ is unsatisfiable if and only if $-1 > 0$ can be obtained as above, i.e.~with $g = -1$.
        \end{theorem}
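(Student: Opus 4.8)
The plan is to recast the statement in the language of the quadratic module generated by the constraints and then prove the Archimedean version of the representation by a separation-plus-moment-problem argument. Write $K := \sat(\Phi) = \{x\in\R^n : g_1(x)\ge 0,\dots,g_k(x)\ge 0\}$ and let
\[
M := \Big\{ h_0 + \sum_{i=1}^k h_i\, g_i \;:\; h_0,\dots,h_k \text{ are sums of squares} \Big\}
\]
be the quadratic module generated by $g_1,\dots,g_k$. Every element of $M$ is manifestly nonnegative on $K$, so $M$ is a convex cone of polynomials certified nonnegative on $K$. The whole theorem reduces to the single assertion that, under the compactness hypothesis, $g>0$ on $K$ forces $g\in M$; the claimed identity $g=h_0+\sum_i h_i g_i$ is then exactly membership in $M$.

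First I would use the compactness hypothesis to establish that $M$ is \emph{Archimedean}, i.e.~that there is $N\in\mathbb{N}$ with $N-\sum_{j=1}^n x_j^2\in M$. Since $\sat(g_i\ge 0)\supseteq K$ is compact for some $i$, the polynomial $\sum_j x_j^2$ is bounded on it, and a standard lemma on compact basic semi-algebraic sets upgrades this boundedness to an explicit membership $N-\sum_j x_j^2\in M$. This Archimedean property is the crucial strengthening of mere compactness of $K$, and is what makes the representation possible.

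The core of the argument is then a Hahn--Banach separation combined with a Riesz--Haviland representation. Suppose, for contradiction, that $g\notin M$. The Archimedean property guarantees that $M$ is closed in a suitable locally convex topology on $\R[x_1,\dots,x_n]$ and that the constant $1$ lies in its interior, so I can separate $g$ from $M$ by a linear functional $L:\R[x_1,\dots,x_n]\to\R$ with $L\ge 0$ on $M$, normalized so that $L(1)=1$, and $L(g)\le 0$. Because $L$ is nonnegative on all squares and on $N-\sum_j x_j^2\in M$, the Archimedean bound forces $L$ to be continuous; Haviland's theorem then yields a positive Borel measure $\mu$ with $L(p)=\int p\,d\mu$ for every polynomial $p$. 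Using $L(\sigma\, g_i)=\int \sigma g_i\,d\mu\ge 0$ for all sums of squares $\sigma$, a localization argument shows $\operatorname{supp}\mu\subseteq K$. Hence $L(g)=\int_K g\,d\mu>0$, since $g>0$ on $K$ and $\mu(K)=L(1)=1>0$, contradicting $L(g)\le 0$. Therefore $g\in M$, giving the desired sum-of-squares representation.

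Finally, the \emph{moreover} clause follows at once: if $\Phi$ is unsatisfiable then $K=\emptyset$ and $-1>0$ holds vacuously on $K$, so the main part applied to $g=-1$ yields $-1=h_0+\sum_i h_i g_i$ with each $h_i$ a sum of squares; conversely, such an identity makes the right-hand side nonnegative at every point of $K$ while equalling $-1$, forcing $K=\emptyset$. I expect the main obstacle to be the representation step: both establishing the Archimedean property from the compactness hypothesis and, especially, turning the separating functional into a measure supported exactly on $K$ (the Riesz--Haviland representation together with the localization of $\operatorname{supp}\mu$) constitute the genuine analytic heart of the proof, whereas the reduction to module membership and the unsatisfiability equivalence are routine.
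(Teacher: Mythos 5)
The paper contains no proof of this theorem to compare against: it is imported verbatim from \cite{putinar1993positive} as background machinery for Step~2 of the algorithm, and the appendix proves only the soundness/semi-completeness theorem that \emph{uses} it. Judged on its own terms, your sketch is the standard modern proof of Putinar's theorem: reduce to membership in the quadratic module $M$, establish the Archimedean property, separate $g$ from $M$, represent the separating functional by a measure supported on $K = \sat(\Phi)$, and contradict $g>0$ on $K$. The outline is correct, and your treatment of the ``moreover'' clause (vacuous application of the main statement to $g=-1$, converse immediate from sign considerations) is fine, with the caveat that the equivalence holds only under the standing compactness hypothesis --- without an Archimedean module, $K=\emptyset$ does not in general force $-1\in M$.

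Two technical points deserve flagging. First, your ``standard lemma'' upgrading boundedness of $\sum_j x_j^2$ on the compact set $\sat(g_i\ge 0)$ to the membership $N-\sum_j x_j^2\in M$ is not a routine compactness argument: it is Putinar's own key lemma, and its proof invokes Schm\"udgen's Positivstellensatz for the single constraint $g_i$ (for one polynomial the preordering and the quadratic module coincide), yielding $N-\sum_j x_j^2\in \Sigma^2+\Sigma^2\cdot g_i\subseteq M$. The distinction matters: mere compactness of $K$, rather than of a single $\sat(g_i\ge 0)$, does \emph{not} imply the Archimedean property (Jacobi--Prestel give compact counterexamples), which is exactly why the theorem is stated with the hypothesis on one $g_i$; so this step is the deep input and cannot be treated as bookkeeping, though you do correctly use the single-$g_i$ hypothesis. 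Second, $M$ need not be closed in any relevant topology, and closedness is not what the separation requires; the correct statement is that the Archimedean property makes $1$ an algebraic interior point (order unit) of $M$, so Eidelheit's separation theorem applies directly to $g\notin M$, after which the representing measure comes from continuity of $L$ plus Haviland, or more simply from a GNS construction, or from extending $L$ to continuous functions on the ball $\sum_j x_j^2\le N$ via a sup-norm bound. With those two repairs made explicit, your argument is a complete and faithful account of the known proof.
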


The following theorem is a classical result from linear algebra that provides a characterization of sum of squares polynomials in terms of positive semidefinite matrices.

\begin{theorem}[{\cite[Theorem 3.39]{blekherman2012semidefinite}}]\label{thm:sos_semi}
	
	Let $\vec{a}$ be the vector of all $\binom{n+d}{d}$ monomials of degree less than or equal to $d$ over the variables $\{x_1, \dots, x_n\}.$ A polynomial $p \in \mathbb{R}[x_1,\dots,x_n]$ of degree $2 \cdot d$ is a sum of squares if and only if there exist a positive semidefinite matrix $Q$ of order $\binom{n+d}{d}$ such that $p = a^T \cdot Q \cdot a.$
\end{theorem}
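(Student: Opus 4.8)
The plan is to prove the two implications separately, the bridge in both directions being the elementary observation that every polynomial of degree at most $d$ is exactly a linear form $\vec{a}^T w$ in the monomial vector $\vec{a}$, for a unique coefficient vector $w \in \mathbb{R}^N$ with $N = \binom{n+d}{d}$.

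For the easier direction (PSD Gram matrix implies SOS), suppose $p = \vec{a}^T Q \vec{a}$ with $Q$ symmetric and positive semidefinite. I would diagonalize $Q$ via the spectral theorem, writing $Q = \sum_i \lambda_i v_i v_i^T$ with eigenvalues $\lambda_i \ge 0$ and orthonormal eigenvectors $v_i$. Substituting gives
\[
p = \vec{a}^T Q \vec{a} = \sum_i \lambda_i (v_i^T \vec{a})^2 = \sum_i \big(\sqrt{\lambda_i}\, v_i^T \vec{a}\big)^2 .
\]
Since each $v_i^T \vec{a}$ is a real polynomial of degree at most $d$, this exhibits $p$ as a sum of squares. (A Cholesky factorization $Q = L^T L$ would serve equally well.)

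For the forward direction (SOS implies PSD Gram matrix), write $p = \sum_{j=1}^k q_j^2$. The first step is a degree bound: I would show that each $q_j$ has degree at most $d$. Letting $e = \max_j \deg q_j$ and comparing the homogeneous components of degree $2e$ on both sides, the component of $p$ vanishes whenever $e > d$ (as $\deg p = 2d$), whereas the degree-$2e$ component of $\sum_j q_j^2$ equals $\sum_j (q_j^{(e)})^2$, where $q_j^{(e)}$ denotes the top homogeneous part of $q_j$. A sum of squares of real polynomials is the zero polynomial only if each summand vanishes, so $e > d$ forces every $q_j^{(e)} = 0$, contradicting the choice of $e$; hence $e \le d$. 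Consequently each $q_j = \vec{a}^T w_j$ for some $w_j \in \mathbb{R}^N$, and then
\[
p = \sum_{j=1}^k (\vec{a}^T w_j)(w_j^T \vec{a}) = \vec{a}^T \Big(\textstyle\sum_{j=1}^k w_j w_j^T\Big) \vec{a}.
\]
Taking $Q := \sum_{j=1}^k w_j w_j^T$ yields a symmetric matrix that is positive semidefinite as a sum of rank-one PSD matrices, completing this direction.

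The main obstacle, and the only step requiring genuine care, is the degree bound in the forward direction: one must rule out cancellation of the highest-degree terms among the squares, which rests on the real-algebraic fact that $\sum_j q_j^2 = 0$ forces each $q_j = 0$ over $\mathbb{R}$ (obtained by evaluating at real points). A secondary subtlety worth flagging is that the quadratic form $Q \mapsto \vec{a}^T Q \vec{a}$ is \emph{not} injective, because the monomials in $\vec{a}$ satisfy multiplicative relations (for instance $x_1 x_2 \cdot 1 = x_1 \cdot x_2$); thus the Gram matrix is far from unique, but the theorem only asserts the existence of one PSD representative, so this nonuniqueness causes no difficulty.
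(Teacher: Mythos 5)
Your proof is correct and follows essentially the same route as the source the paper relies on: the paper states this result without proof, citing \cite[Theorem 3.39]{blekherman2012semidefinite}, and your argument---spectral (or Cholesky) factorization of $Q$ giving $p=\sum_i \bigl(\sqrt{\lambda_i}\,v_i^T\vec{a}\bigr)^2$ in one direction, and in the other the degree bound via comparison of top homogeneous components (using that $\sum_j \bigl(q_j^{(e)}\bigr)^2 = 0$ over $\mathbb{R}$ forces each $q_j^{(e)}=0$) followed by $Q:=\sum_j w_j w_j^T$---is exactly the standard Gram-matrix proof given there. Your closing remarks are also on point: the degree bound is indeed the only delicate step, and the nonuniqueness of the Gram matrix is harmless since only existence of one PSD representative is claimed.
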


\subsection{Algorithm Step~2 Details}

Using these theorems, we now present the mathematical foundations of Step~2 of our algorithm, which reduces the problem of checking satisfiability of a purely universally quantified formula to solving a system of polynomial inequalities. In particular, consider entailments of the form
\[
\exists C ~ \forall x ~ f_1 \geq 0 \wedge \ldots \wedge f_k \geq 0 \implies g > 0,
\]
where $f_1, \ldots, f_k, g$ are template-based polynomials, i.e., $f_i, g \in \mathbb{R}[C][x_1,\cdots,x_n]$, where $C$ denotes template variables. We can use the theorems mentioned above according to the following cases:

\begin{itemize}
\item \textbf{Case 1.} If $f_i$‘s and $g$ are linear polynomials, we use Farkas’ lemma to check if the entailment holds.
\item \textbf{Case 2.} If the $f_i$‘s are linear polynomials and $g$ is a polynomial, we use Handelman’s theorem to check if the entailment holds.
\item \textbf{Case 3.} If $f_i$‘s and $g$ are polynomials, we use Putinar’s Positivstellensatz to check if the entailment holds.
\end{itemize}

Without loss of generality, we present the reduction for Case~3. If we can express $g$ in the form of Equation~\ref{eq:sos}, then using Putinar’s Positivstellensatz, we know that the entailment holds.
\begin{equation}
g = h_0 + \sum_{i=1}^{m} h_i \cdot f_i
\label{eq:sos}
\end{equation}
where the $h_i$’s are sum of squares polynomials over the variables $x_1,\cdots,x_n$.

This reduces the problem of checking entailments to finding $h_0, \ldots, h_m$, which are sums of squares for a given degree $2 \cdot d$. Using Theorem~\ref{thm:sos_semi}, we know that this is equivalent to finding a positive semidefinite matrix $Q$ such that $h_i = a^T \cdot Q \cdot a$ for all $i$.

Next, we show how this problem can be reduced to a quadratic programming problem.

Since we are working with real numbers, to verify that Equation~\ref{eq:sos} holds, we only need to ensure that the coefficients on the LHS and RHS of the equation are equal. This eliminates the universal quantifier over $x_1,\cdots,x_n$.

When we equate both sides of the equation and expand the terms, we obtain a system of quadratic equations in the coefficients of the $h_i$’s and the template coefficients of the $f_i$’s and $g$. Additionally, we impose the constraint that $Q$ is positive semidefinite. This results in a quadratic programming problem that can be solved using standard solvers.

Similarly, for the other cases, the problem also reduces to solving polynomial inequalities.

For Case~1, we need to find non-negative real numbers $y_0,y_1,\dots y_u$ such that the following equation holds:

\[
g = y_0 + \sum_{i=1}^m y_i \cdot f_i.
\]

For Case~2, we need to find non-negative real numbers $y_1,\dots y_u$ and polynomials $h_1,\dots,h_u$ from the semi-group generated of a given degree $d$ by the $f_i$’s such that the following equation holds:

\[
g = y_0 + \sum_{i=1}^u y_i \cdot h_i.
\]

\paragraph{Strict and Non-strict Inequalities} Note that in our algorithm, we also work with the case when there are strict inequalities in hypothesis and conclusion of the entailment. In this case, the above mentioned steps can still be applied. Soundness of the algorithm for strict inequalities follows similarly as for non-strict inequalities. However, for completeness of the algorithm the above mentioned theorems are not enough. Fortunately, there are other results in literature that have been developed to handle non-strict inequalities \cite{GoharshadyHMM23,DBLP:conf/pldi/AsadiC0GM21}. We omit the details of these results for brevity. In particlar, one can use these methods along with Krivine–Stengle Positivstellensatz to handle strict and non-strict inequalities, which we present~next.

\paragraph{Krivine–Stengle Positivstellensatz}
This is the most general version of the Positivstellensatz. One can view the Krivine–Stengle Positivstellensatz \cite{krivine1964anneaux,stengle1974nullstellensatz} as a generalization of Putinar's Positivstellensatz, Handelman's theorem, and the real Nullstellensatz. Note that in the earlier versions of the Positivstellensatz we considered, had assumptions about compactness and the strictness of inequalities in the conclusions. However, to ensure the completeness of our algorithm, we require a version of the Positivstellensatz that can work without these assumptions.

Before we present the theorem, we need to recall the following definition of cone.

\begin{definition}[Cone]
    Let \( F = \{ f_1, \dots, f_m \} \) be a set of polynomials in \( \mathbb{R}[x_1, \dots, x_n] \). The cone generated by \( F \) is the set \( P(F) \) defined as follows:

    \[ 
        P(F) := \left\{ \sum_{\alpha \in \{0,1\}^m} \sigma_\alpha f_1^{\alpha_1} \cdots f_m^{\alpha_m} \mid \sigma_\alpha \in \Sigma^2[x_1, \dots, x_n] \right\},
    \]

    where \( \Sigma^2[X_1, \dots, X_n] \) is the set of all polynomials in \( \mathbb{R}[x_1, \dots, x_n] \) that are sums of squares. 
\end{definition}

\begin{theorem}[Krivine–Stengle Positivstellensatz { \cite{krivine1964anneaux,stengle1974nullstellensatz}}]
Let \( W \) be a semi-algebraic set in \( \mathbb{R}^n \) defined as follows:
\[
W = \{ x \in \mathbb{R}^n \mid g_1(x) \geq 0, \dots, g_s(x) \geq 0 \},
\]
and let \( P \) be the cone of \( \mathbb{R}[x_1, \dots, x_n] \) generated by \( g_1, \dots, g_s \). Let \( f \in \mathbb{R}[x_1, \dots, x_n] \). Then:

\begin{itemize}
    \item[(i)] \( \forall x \in W, \ f(x) \geq 0 \iff \exists m \in \mathbb{N}, \ \exists g, h \in P \text{ such that } f\cdot g = f^{2m} + h. \)
    
    \item[(ii)] \( \forall x \in W, \ f(x) > 0 \iff \exists g, h \in P \text{ such that } fg = 1 + h. \)
    
    \item[(iii)] \( \forall x \in W, \ f(x) = 0 \iff \exists m \in \mathbb{N}, \ \exists g \in P \text{ such that } f^{2m} + g = 0. \)
\end{itemize}
\end{theorem}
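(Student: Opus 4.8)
The plan is to deduce all three equivalences from a single \emph{abstract Positivstellensatz} for preorderings of commutative rings, and then to transfer this purely algebraic statement to the geometric setting over $\mathbb{R}^n$ via the Artin--Lang homomorphism theorem and the Tarski--Seidenberg transfer principle. The first observation is that the cone $P$ generated by $g_1,\dots,g_s$, as defined above, is exactly the preordering of the ring $A := \mathbb{R}[x_1,\dots,x_n]$ generated by $g_1,\dots,g_s$: its elements are finite sums $\sum_{\alpha} \sigma_\alpha\, g_1^{\alpha_1}\cdots g_s^{\alpha_s}$ with sum-of-squares coefficients $\sigma_\alpha$, so $P$ is closed under addition and multiplication and contains every square. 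In particular every element of $P$ is manifestly nonnegative at each point of $W$, which immediately settles the ``$\Leftarrow$'' direction of all three parts by evaluation: e.g.\ in (ii), if $fg = 1 + h$ with $g,h \in P$, then at any $x \in W$ we get $f(x)g(x) = 1 + h(x) \ge 1 > 0$ with $g(x) \ge 0$, forcing $f(x) > 0$; parts (i) and (iii) are analogous.

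For the substantive ``$\Rightarrow$'' directions, I would first establish the abstract statement: for a preordering $T$ of a commutative ring $A$ and $a \in A$, the element $a$ is nonnegative (resp.\ strictly positive, resp.\ zero) at every \emph{ordering} (prime cone) of $A$ containing $T$ if and only if there exist $s,t \in T$ and $m \in \mathbb{N}$ with $as = a^{2m} + t$ (resp.\ $as = 1 + t$, resp.\ $-a^{2m} \in T$). This is proved by contraposition with Zorn's lemma: assuming no such identity exists, one shows that a suitable preordering remains proper (does not contain $-1$), enlarges it to a \emph{maximal} proper preordering avoiding the forbidden element, and proves that any such maximal object is total (satisfies $P \cup (-P) = A$) with prime support $P \cap (-P)$, hence is an ordering. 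This ordering is a point of the real spectrum of $A$ at which the desired sign condition fails, contradicting the hypothesis.

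It then remains to identify ``all orderings of $A$ containing $T$'' with ``all points of $W \subseteq \mathbb{R}^n$.'' Each ordering of $A$ with real support induces a homomorphism from $A$ into a real closed field $R \supseteq \mathbb{R}$ sending $(x_1,\dots,x_n)$ to a point of $W(R)$, and conversely each $R$-point of $W$ yields such an ordering. The Artin--Lang homomorphism theorem guarantees there are enough honest real points, so that no ordering over a proper extension can invalidate a conclusion already forced over $\mathbb{R}$, and the Tarski--Seidenberg transfer principle shows that a sign condition on $f$ holds at all points of $W(\mathbb{R})$ if and only if it holds at all points of $W(R)$ for every real closed $R$. Combining these equivalences replaces the quantification over orderings in the abstract statement with the quantification over $x \in W$ in (i)--(iii), completing the proof.

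The main obstacle is the Zorn's-lemma step of the second paragraph together with its structure theory, namely showing that a maximal proper preordering avoiding a prescribed element is automatically a prime ordering; this is where essentially all the content lies. A secondary difficulty is making the Artin--Lang and Tarski--Seidenberg transfer precise enough that the correspondence between abstract orderings and genuine real points of $W$ is exact across the three sign conditions. By contrast, the cone-equals-preordering identification and the evaluation arguments for the easy directions are routine.
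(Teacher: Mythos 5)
The paper does not prove this theorem at all: it is quoted verbatim as a classical result, with the proof deferred entirely to the cited sources of Krivine and Stengle. So the only meaningful comparison is with the standard literature proof, and your outline reproduces it faithfully: the trivial ``$\Leftarrow$'' directions by evaluation on $W$ (noting that $P$ consists of functions nonnegative on $W$), the abstract Positivstellensatz for preorderings via Zorn's lemma (a maximal proper preordering is total with prime support, hence an ordering of the real spectrum), and the passage from orderings of $\mathbb{R}[x_1,\dots,x_n]$ containing the preordering to actual points of $W$ via real closed residue fields and the Tarski--Seidenberg/model-completeness transfer --- exactly the route of Bochnak--Coste--Roy (Chapter 4) and Marshall. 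Your sketch is sound at the level of detail given, and you correctly locate the real work in the Zorn step. Two fine points you gloss over but should be aware of when filling in details: in case (i) the exponent $f^{2m}$ arises from a localization trick (one inverts $f$ and applies the strict-positivity criterion in $A[1/f]$, then clears the even-power denominators), so ``a suitable preordering'' there is not simply $T - fT$ in $A$; and the transfer step silently uses that $\mathbb{R}$ carries a unique ordering and that any inclusion of real closed fields is elementary, which is what legitimizes replacing quantification over abstract orderings by quantification over $x \in W(\mathbb{R})$. Neither is a gap in the plan, only in its execution.
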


As in the steps mentioned earlier, we can fix a degree bound \( d \) for the cone \( P \) and \( m \), and then check if the entailment holds by finding suitable \( g, h \) up to degree \( d \) in the cone \( P \) such that the equation corresponding to the suitable case mentioned in the theorem are satisfied.

Specifically, the similar steps can be applied to generate the sum-of-squares polynomials, and as in previous cases, we need to check whether the coefficients of the polynomials on both sides of the suitable equation match. This results in a polynomial optimization step.

We would also like to highlight the similarity between the cone used in the Krivine–Stengle Positivstellensatz and the semigroup used in Handelman’s theorem, as well as the sum-of-squares used in Putinar's Positivstellensatz.

While this version of the Positivstellensatz is essential for ensuring completeness, our implementation utilizes Putinar’s Positivstellensatz and Handelman’s theorem, which offer greater practical efficiency.

In particular, we have also implemented a non-linear version of Handelman’s theorem, where the hypothesis involves non-linear polynomials, and the conclusion is likewise a polynomial. More precisely, we generated semigroup of polynomials upto a give a degree bound and then checked if the following equations holds:

\[
g = y_0 + \sum_{i=1}^u y_i \cdot h_i.
\]
where $h_i$'s are polynomials in the semigroup generated by the hypothesis polynomials and $y_i$'s are non-negative real numbers. This step can be seen as non-linear extension of Handelman's theorem, to the case when the hypothesis involves non-linear polynomials.

This version is sound and proves to be very efficient compared to the full strength of Putinar’s Positivstellensatz and the Krivine–Stengle Positivstellensatz. Notably, one can view this as a special case of the Krivine–Stengle Positivstellensatz, with the sums-of-squares in the cone being replaced by positive constants.

Now that we have presented the mathematical background, we are ready to prove the soundness, semi-completeness, and complexity of our algorithm.

\subsection{Proof of Theorem in Algorithm Section}

\begin{theorem*}[Soundness, Semi-Completeness, Complexity]
    The algorithm is sound and semi-complete, both for quantified formulas in LRA and NRA. It runs in subexponential time and polynomial space, parametrized by the template polynomial degree $D$.
\end{theorem*}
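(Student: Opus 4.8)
The plan is to establish the three asserted properties—soundness, semi-completeness, and complexity—separately, by tracking how each of the three algorithm steps transforms the formula together with its models.

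For soundness, I would argue by tracing implications backward through Steps 3, 2, and 1. Suppose the SMT solver in Step 3 returns a model of $\phi^{\textsc{free}}$; this model assigns concrete reals to the template variables $c_{i,j}$ and to the fresh variables introduced by the Positivstellens\"atze. The crux is the \emph{sound direction} of each certificate theorem (Farkas' Lemma, Handelman's Theorem, Putinar's Positivstellensatz, and Krivine--Stengle): whenever the algebraic identity witnessed by $\phi^{\textsc{free}}$ holds (for instance $g = h_0 + \sum_i h_i f_i$ with each $h_i$ a sum of squares), the corresponding entailment $\Phi \Rightarrow \psi$ is valid for \emph{all} valuations of the universally quantified variables. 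Conjoining these over all clauses $\psi_i$ shows that $\phi^{\textsc{univ}}$ holds under the recovered coefficients, and since $\phi^{\textsc{univ}}$ is obtained from $\phi$ by substituting each existential $x_i$ with the now-concrete Skolem function $f_i$, the defining property of Skolem functions yields satisfiability of $\phi$. The strict versus non-strict inequality case is discharged using the variants indicated in the Appendix.

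For semi-completeness, I would assume the hypothesis directly: there exist polynomial Skolem functions of degree at most $D$ that witness satisfiability of $\phi$. Substituting these concrete functions makes $\phi^{\textsc{univ}}$ a \emph{true} universally quantified statement, so each entailment $\Phi \Rightarrow \psi$ produced in Step 2 is a valid polynomial entailment. I would then invoke the \emph{completeness direction} of the Positivstellensatz—here in the Krivine--Stengle form, which requires neither compactness nor strictness of the conclusion—to guarantee that each valid entailment admits a certificate. Consequently $\phi^{\textsc{free}}$ is satisfiable (witnessed by the Skolem coefficients together with the certificate coefficients), the SMT solver in Step 3 succeeds, and the algorithm returns \texttt{SAT}. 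For the complexity bound, I would first observe that for a fixed template degree $D$ (and a fixed certificate-degree bound) the number of template variables $\binom{|U_i|+D}{D}$, the number of fresh variables, the number of polynomial constraints, and the maximal degree appearing in $\phi^{\textsc{free}}$ are all polynomial in the size of $\phi$, and that Steps~1 and~2 themselves run in polynomial time. It then remains to bound Step~3, i.e.\ deciding the existential theory of the reals for $\phi^{\textsc{free}}$; invoking the algorithms of \cite{grigor1988solving} for satisfiability of systems of polynomial inequalities, which run in subexponential time and polynomial space in the number of variables, yields the stated overall bounds.

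The main obstacle will be semi-completeness. The delicate point is that the Positivstellens\"atze actually used in the implementation (Putinar, Handelman) carry compactness or strictness assumptions that need not hold for arbitrary true entailments arising from a valid Skolemization; the argument must therefore rest on the fully general Krivine--Stengle Positivstellensatz and on fixing a sufficiently large but finite certificate degree, so that the search encoded in $\phi^{\textsc{free}}$ is guaranteed to encounter a valid certificate. Reconciling this theoretical completeness guarantee with the concretely implemented, more efficient certificates is precisely the subtle step that makes the overall result \emph{semi}-complete rather than complete.
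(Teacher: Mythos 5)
Your proposal follows essentially the same route as the paper's proof: soundness by chaining the sound directions of Farkas/Handelman/Putinar/Krivine--Stengle backward through Steps 3--2--1 and then appealing to the defining property of Skolem functions; semi-completeness by substituting the assumed degree-$D$ Skolem functions and invoking the completeness direction of a Positivstellensatz; and complexity by noting that Steps 1--2 are polynomial in the formula size (for fixed $D$) while Step 3 is an instance of the existential theory of the reals, solvable in subexponential time and polynomial space by \cite{grigor1988solving,canny1988some}.

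The one place you diverge --- and where your main-line argument, as written, has a hole that you yourself flag --- is semi-completeness. Krivine--Stengle guarantees that every valid entailment admits \emph{some} certificate, but it gives no bound on the degree of the cone elements $g,h$ or on the exponent $m$; since $\phi^{\textsc{free}}$ encodes only certificates up to a fixed degree, certificate existence alone does not make $\phi^{\textsc{free}}$ satisfiable. Your proposed repair (``fixing a sufficiently large but finite certificate degree'') is not available to the algorithm: the degree is an input parameter, and no a priori bound on the required certificate degree is computed. The paper resolves this differently: it builds the certificate degree into the notion of semi-completeness itself. Its Step-2 argument claims only that \emph{if} there is a Krivine--Stengle certificate with cone degree at most $D$ and exponent $m \leq D$ (and analogously for the Farkas/Handelman/Putinar cases), then the encoding is guaranteed to find it; that is, the parametrized class of witnesses comprises both the degree-$D$ Skolem functions \emph{and} the degree-bounded Positivstellensatz certificates, and the guarantee proven is relative to both bounds. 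This is precisely what keeps the result ``semi''-complete rather than complete. If you weaken your semi-completeness statement in the same way, your argument goes through and coincides with the paper's; as stated, the claim that validity of the entailments implies satisfiability of $\phi^{\textsc{free}}$ is false in general.
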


\begin{proof}
{\em Soundness.} Step~1 of the algorithm is sound since Skolemization by a general semi-algebraic function in quantified formulas in LRA and NRA is sound and complete by Theorem~1 in the main body of the paper~\cite{scowcroft1988note}. Hence, using polynomial Skolem functions is sound. Step~2 is sound by the soundness of Positivstellens\"atze theorems outlined above. Finally, Step~3 is sound since the procedures for solving systems of polynomial inequalities used by our SMT solvers are sound and complete~\cite{canny1988some,collins-cad75}. Thus, the whole algorithm is sound for satisfiability checking for quantified formulas in LRA and NRA.

{\em Semi-completeness.} The semi-completeness claim is that, if there exist polynomial Skolem functions of degree at most $D$ for each existentially quantified variable that together witness quantified formula satisfiability, then our method is guaranteed to find them. To prove this, observe first that in Step~1 the algorithm we fix a symbolic polynomial template of degree at most $D$ for each Skolem function. Hence, this step is semi-complete by definition. Next, in Step~2, we apply the Positivstellens\"atze theorems to check the satisfiability of polynomial entailments. In particular, depending on the case on the degree and strictness of the inequalities, we use Farkas' lemma, Handelman's theorem, Putinar's Positivstellensatz, or Krivine–Stengle Positivstellensatz as mentioned above. Without loss of generality, let us consider the case when hypothesis and conclusion consists of non-strict polynomial inequalities. More precisly, let us assume that the polynomial entailment is of the form:

\[
g_1 \geq 0 \wedge \ldots \wedge g_k \geq 0 \implies f \geq 0.
\]

In this case we apply the Krivine–Stengle Positivstellensatz.

In particular, we generate the cone of polynomials in the hypothesis $P(g_1,\dots,g_k)$, and check if the following equation holds:

\begin{equation}\label{eqn:complete}
    f\cdot g = f^{2m} + h, 
\end{equation}
    
where $m$ is a natural number less than $D$ and $h$ is a polynomial in the cone $P(g_1,\dots,g_k)$ of degree at most $D$. This step is semi-complete, as if there exists a solution for the given degree bound $D$ on the cone and $m$, then we are guaranteed to find it due to the Krivine–Stengle Positivstellensatz. Similarly, for other cases, we use Farkas' Lemma when hypothesis and conclusion consists of linear inequalities, Handelman's theorem when hypothesis consists of linear inequalities and conclusion is non-linear inequality, and other cases of Krivine–Stengle alongwith theorems presented in \cite{GoharshadyHMM23,DBLP:conf/pldi/AsadiC0GM21} when the conclusion and hypothesis consists of both strict and non-strict inequalities and all these steps are semi-complete.

Finally, in Step~3, we solve the system of polynomial inequalities over the template-variables to find their real valuations such that the equations obtained in Step~2 are satisfiable. This step is also semi-complete, as if there exists a solution, we are guaranteed to find it due to the completeness of the procedures for solving systems of polynomial inequalities used by our SMT solvers~\cite{canny1988some,collins-cad75}. Hence, the whole algorithm is semi-complete.

{\em Complexity.} The runtime complexity of Step~1 is polynomial when parametrized by the polynomial degree $D$, since it simply replaces each appearance of an existentially quantified variable by a symbolic polynomial Skolem function for that variable, with a polynomial degree of at most $D$. The runtime  complexity of Step~2 is also polynomial when parametrized by the polynomial degree $D$, as it syntactically applies the Positivstellens\"atze theorems outlined above which can be done in time polynomial in the size of the formula, when parametrized by $D$. Finally, Step~3 runs in subexponential time and polynomial space as it solves a system of polynomial inequalities over real-variables. This is because satisfiability checking in the first order existential theory of the reals can be done in subexponential time and polynomial space~\cite{grigor1988solving,canny1988some}. Hence, the whole algorithm runs in subexponential time and polynomial space, parametrized by the template polynomial degree $D$.
\end{proof}

\section{NRA example}
In this section, we provide an example for an NRA formula. We apply Putinar Positivstellens\"atze to find non-linear Skolem functions. Consider the following formula:

\begin{equation*}
	\begin{split}
		\phi \equiv &\forall x_1. \forall x_2. \exists x_3. (- x_1^2 - x_2^2 + 1 > 0 \lor x_1 < 10) \\&\land (x_1 < 0 \lor x_3 > x_1^2)
	\end{split}
\end{equation*}

Observe that linear Skolem functions cannot tackle this problem. We apply Putinar theorem with $D = 2$ and $d = 1$.

\paragraph{Step 1} In this example the only existential variable is $x_3$ with two preceding variables $x_1$ and $x_2$, so, $U_3 = \{x_1, x_2\}$. Thus the algorithm sets the following template function for $x_3$:

\begin{equation*}
	\begin{split}
		f_3(x_1, x_2) = &c_{2, 1} + c_{2, 2} . x_1 + c_{2, 3} . x_2 +\\& c_{2, 4} . x_1.x_2 + c_{2, 5} . x_1^2 + c_{2, 6} . x_2^2
	\end{split}
\end{equation*} 

Substituting this template into the initial formula $\phi$ we get:

\begin{equation*}
	\begin{split}
		\phi^{\textsc{univ}} \equiv &\forall x_1. \forall x_2. \\
					&\big(- x_1^2 - x_2^2 + 1 > 0 \lor x_1 < 10\big) \\
					&\land \big(x_1 < 0 \lor f_3(x_1, x_2) > x_1^2\big)
	\end{split}
\end{equation*} 

\paragraph{Step 2} In this step we convert the formula to entailment:

\begin{equation*}
	\begin{split}
		\phi^{\textsc{univ}} \equiv &\big(x_1^2 + x_2^2 \leq 1 \Rightarrow x_1 < 10\big) \\
		&\land \big(x_1 \geq 0 \Rightarrow f_3(x_1, x_2) > x_1^2\big)
	\end{split}
\end{equation*} 

Then, our algorithm applies Putinar with $d=1$. To this end, it generates new polynomials $h_0$, $h_1$, $h_2$, and $h_3$ where for each $i$:

\begin{equation*}
	\begin{split}
		h_i = &\eta_{i, 0} + \eta_{i, 1} . x_1 + \eta_{i, 2} . x_2 +\\& \eta_{i, 3} . x_1.x_2 + \eta_{i, 4} . x_1^2 + \eta_{i, 5} . x_2^2
	\end{split}
\end{equation*}

and using Theorem~\ref{thm:sos_semi}, ensures that each $h_i$ is actually a sum of squares (adds the required constraints to the input of Step 3). Then the algorithm generates the following equalities:

\begin{equation*}
	\begin{cases}
		-x_1 + 10 = h_0 + h_1.(-x_1^2 - x_2^2 + 1)\\
		f_3(x_1, x_2) - x_1^2 = h_2 + h_3.x_1
	\end{cases}
\end{equation*}

and equating both sides of the equations, we obtain a system of quadratic equations for the SMT solver.

\paragraph{Step 3} We input the system of equations obtained in Step 2 to an SMT solver and get the following result:

\begin{equation*}
	\begin{cases}
		f_3(x_1, x_2) = x_1^2 + 1\\
		h_0 = x_2^2 + (x_1 - \frac{1}{2})^2 + \frac{35}{4}\\
		h_1 = 1\\
		h_2 = 1 \\
		h_3 = 0
	\end{cases}
\end{equation*}

Hence $f_3 = x_1^2$ is a Skolem function that makes the formula $\phi$ true.

\end{document}